\newtheorem{thm}{Theorem}[section]
\newtheorem{proposition}[thm]{Proposition}
\newtheorem{remark}[thm]{Remark}
\newcommand{\bm}{{\mathbf m}}
\newcommand{\al}{{\alpha}}
\newcommand{\e}{\varepsilon}
\newcommand{\Lo}{{\rm Log}}
\def\Xint#1{\mathchoice
   {\XXint\displaystyle\textstyle{#1}}%
   {\XXint\textstyle\scriptstyle{#1}}%
   {\XXint\scriptstyle\scriptscriptstyle{#1}}%
   {\XXint\scriptscriptstyle\scriptscriptstyle{#1}}%
   \!\int}
\def\XXint#1#2#3{{\setbox0=\hbox{$#1{#2#3}{\int}$}
     \vcenter{\hbox{$#2#3$}}\kern-.5\wd0}}
\def\dashint{\Xint-}
\numberwithin{equation}{section}
\begin{document}
\title{Asymptotic shapes with free boundaries}

\author{Philippe Di Francesco}
\address{Institut de Physique Th\'eorique du Commissariat \`a l'Energie Atomique,
Unit\'e de Recherche associ\'ee du CNRS,
CEA Saclay/IPhT/Bat 774, F-91191 Gif sur Yvette Cedex,
FRANCE. e-mail: philippe.di-francesco@cea.fr}

\author{Nicolai Reshetikhin}
\address{Department of Mathematics, University of California at
Berkeley,
Berkeley, CA 94720-3840. E-mail:
reshetik@math.berkeley.edu}

\date{\today}
\begin{abstract}
We study limit shapes  for dimer models on domains of the hexagonal lattice with free boundary conditions. This is equivalent to the large deviation phenomenon for a random stepped surface over domains fixed only at part of the boundary.
\end{abstract}

\maketitle

\section*{Introduction}

In this paper we study limit shapes for random tilings by rhombi with free boundary conditions.

Recall that there is a local bijection between 3D-partitions, tilings of a plane (of a triangular lattice on a plane) by
rhombi, and dimer configurations on the dual hexagonal lattice
\cite{NHB}.
Another important local bijection is between lattice paths and rhombus tilings.
It is illustrated on Fig.\ref{tilingcuthex}.

The limit shape phenomenon is studied well in dimer
models on hexagonal lattices and for corresponding tiling models
in the case when the boundary of the domain consists of long pieces parallel to the axes of the triangular lattice.
There are also some results for the case when
the boundary is a fixed "zig-zag" path. But in all cases the shape
of the boundary is fixed.

The goal of this paper is to study the limit shapes
when part of the boundary is a random variable. We
call this ``free boundary". An example of a free boundary for
a cut hexagon region is shown on Fig. \ref{tilingcuthex}. The end points
of lattice paths on the Noth-East boundary are not fixed,
which corresponds to variable configurations of boundary  tiles.

In this paper we assume that configurations (tilings, 3D-partitions, paths) are distributed uniformly, or weighted with the probability
\begin{equation}\label{b-weight}
Prob(S)\propto q^{|S|}
\end{equation}
where $q$ is a real positive number, and $|S|$ is the
volume under the height function corresponding to the configuration.

The computations are based on the Gessel-Vionnet formula for the number of lattice paths \cite{GV}. In the last section we compare our results with the analysis of limit shapes of height functions for dimer models based on the Burgers-Hopf equations \cite{KO}.

In Section \ref{one} we compute the boundary of the limit shape
for cut hexagons (see Fig.\ref{tilingcuthex}) of large size when the lattice paths are distributed as in
(\ref{b-weight}). We assume that as the size of the cut hexagon increases, $q$ is approaching $1$
as $q=\exp(\e)$ where ${1\over \e}$ is proportional to the linear size of the system.
In Section \ref{triangles} we
extend these results to the case when there are ``frozen"
triangles along the free boundary. In particular, when two frozen
triangles are next to the ends of the free boundary, the problem
includes the non-cut hexagon as a particular case Fig.\ref{fig:hexa}.
In Section \ref{hexagon} we compare our results for the full hexagon with the paper \cite{CLP} where they were derived originally. Finally Section \ref{burgers} contains the comparison of our results with the derivation of the limit shape from the Hopf-Burgers equation \cite{KO}. In Section \ref{symmetries} we use a symmetry principle
and the results obtained in previous sections to obtain the limit shape for the TSSCPP problem \cite{TSdots}.

The work of N.R. was supported by Danish National
Research Foundation through the
Niels Bohr initiative, by the NSF grant DMS-0601912,
and by DARPA. P.D.F. is supported in part by the ANR Grant GranMa, the
ENIGMA research training network MRTN-CT-2004-5652,
and the ESF program MISGAM.

\section{Case of a cut hexagon}\label{one}

In this section we will focus on random tiling with weights
(\ref{b-weight}). The goal is to find the asymptotic distribution of boundary positions of paths in the continuum limit
i.e. when  $k,n\to \infty$ and $q=\exp(\pm\e)\to 1$ such that $k\e\to \beta$ and $n\e\to \alpha$ are finite.

We will not give a rigorous proof of the existence of the
limit shape, i.e. that the random boundary partition converges in probability to the limit shape function.
We will give reasonable standard arguments, based on the variational principle,
that it does,
and compute the limit shape explicitly.

\begin{figure}
\centering
\includegraphics[width=13.cm]{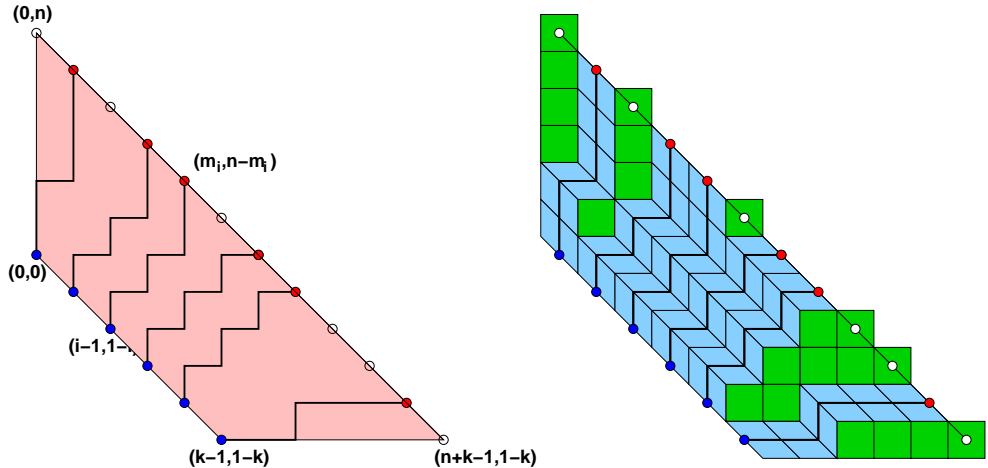}
\caption{A typical tiling of the cut hexagon $n\times k\times n$, with
free boundary conditions on the cut. We have represented the tiling on the square lattice.
One keeps track of the two non-square tiles by drawing a segment in the middle.
The segments form non-intersecting lattice paths, starting at points
$(i-1,1-i)$, $i=1,2,...,k$ and ending on the cut $x+y=n$, at points $(m_i,n-m_i)$,
$0\leq m_i\leq n+k-1$.}\label{tilingcuthex}
\end{figure}

\subsection{ $q$-enumeration}

\begin{figure}
\centering
\includegraphics[width=4.cm]{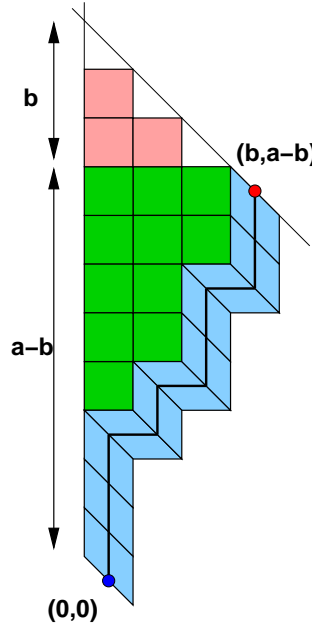}
\caption{The area under the path (green boxes)
plus that in the top triangle (pink boxes)
accounts for the partition function
$q^{b(b-1)/2}{a\choose b}_q$ for lattice paths from $(0,0)$ to $(b,a-b)$,
with a weight $q$ per box in the corresponding slice
of plane partition.}\label{fig:qtile}
\end{figure}

The generating function for the number of tilings
viewed as plane partitions,
with a weight $q$ per box (with green top) can be
expressed as:
\begin{equation}\label{qZ}
Z_{k,n}(\bm\vert q)=\det_{1\leq i,j\leq k}
q^{(m_j-i+1)(m_j-i)\over2} {n \choose m_j-i+1}_q
\end{equation}
where the $q$-binomial reads
${a\choose b}_q= {[a]_q !\over [b]_q! [a-b]_q!}$,
and the $q$-factorial $[a]_q!=\prod_{i=1}^a (1-q^i)$.

This formula is explained in Fig.\ref{fig:qtile}.
The $q$-binomial ${a\choose b}_q$ enumerates the
lattice paths from $(0,0)$ to $(b,a-b)$ with a weight
$q^A$, where $A$ is the (discrete) area under the path,
namely the number of unit squares in the domain
delimited by the axes $x=0$, $y=a-b$ and the path.
This gives the correct weight $q$ per box in the plane partition, within
the slice containing the path, provided one adds up the contribution
of the boxes in the domain delimited by $x=0$, $y=a-b$ and $x+y=a$
(pink boxes), with area ${b \choose 2}$,
hence the additional prefactor $q^{m_j-i+1\choose 2}$.

The formula \eqref{qZ} factors out a Vandermonde type
determinant and it has the following multiplicative form:
\begin{equation}\label{proqZ}
Z_{k,n}(\bm\vert q)=q^{{k+1\choose 3}+{1\over 2}\sum_{i=1}^k m_i(m_i-2k+1)}
\Delta(q^\bm) \,
\prod_{i=1}^k {[n+k-i]_q!\over [m_i]_q! [n+k-m_i-1]_q!}
\end{equation}
where $\Delta(q^\bm)=\prod_{1\leq i,j \leq k}(q^{m_i}-q^{m_j})$.

The generating function is symmetric with respect
to the mapping $q\to q^{-1}$
\begin{equation}\label{q-symm}
Z_{k,n}(\bm\vert q)=q^{s_{k,n}}Z_{k,n}(\bar{\bm}\vert q)
\end{equation}
where $\bar{m}_i=n+k-1-m_i$, $s_{k,n}=-\frac{k^3-k}{2}-\frac{nk(n+k)}{2}+k(n+k-1)$.

\subsection{The continuum limit}

\subsubsection{} Now we consider $\{\bm\}$ as a random variable with
\[
Prob(\{\bm\})\propto Z_{k,n}(\bm\vert q)
\]
and we will study this random variable in the limit when $q=e^{\pm\e}$,
$\e\to +0$, $n,k\to \infty$, and $\alpha=n\e, \beta=k\e$ are fixed.

Let us start with the case $q=e^{-\e}$. The following results can be easily extended to $q=e^\e$ using the symmetry \eqref{q-symm}.

When $\e\to 0$ and $\mu=m\e$ is kept finite we have:
$$
{\rm Log}\, [m]_q! = \frac{1}{\e}\, \int_0^{\mu} {\rm Log}(1-e^{-x})dx+O(1)
$$

Assume that as $\e\to 0$, the density
\[
\mu(t)=\sum_{i=1}^k m_i\e\delta(t-i\e)
\]
converges to the density corresponding to the continuous function $\mu(t)$ on the interval $[0,\beta]$. Then, the leading term of the asymptotic of the generating function \eqref{proqZ} is given by
\[
Z_{k,n}(\bm\vert q)= e^{-{1\over \e^2} S\{\mu\}+o({1\over \e})}
\]
where
\begin{eqnarray*}
S\{\mu\}&=&\int_0^\beta {1\over 2}\mu(t)(\mu(t)-2\beta)
+\int_0^\beta\left[\int_0^{\mu(t)}  {\rm Log}(1-e^{- x})\,dx
+\int_0^{\lambda+1-\mu(t)} \, {\rm Log}(1-e^{- x})\,dx\right]\,dt \\
&& -{1\over 2}\int_0^\beta\int_0^\beta{\rm Log}\vert
e^{-\mu(s)}-e^{-\mu(t)}\vert\,ds\,dt + C(\al,\beta)
\end{eqnarray*}
where $C(\al,\beta)$ reads:
$$
 C(\al,\beta)={1\over 6}-\int_0^\beta dx
\int_0^{\alpha+\beta-x}\, {\rm Log}(1-e^{- y})\,dy
$$

Assume that the function $\mu(t)$ is differentiable and consider the
density $\rho(\mu)$:
\[
\rho(\mu(t))={1\over \mu'(t)}, \ \ dt=\rho(\mu)d\mu
\]
In terms of $\rho(\mu)$ the functional $S$ reads as
\begin{eqnarray*}
S[\rho]&=&\int_0^{\alpha+\beta}({1\over
  2}\mu(\mu-2\beta)+L(\mu)+L(\alpha+\beta-\mu)) \rho(\mu)d\mu \\
&&-{1\over 2}\dashint_0^{\alpha+\beta} \dashint_0^{\alpha+\beta} \ln\vert
e^{-\mu}-e^{-\nu}\vert\rho(\mu)\rho(\nu)\,d\mu\,d\nu
\end{eqnarray*}
where $L(x)=\int_0^x\ln(1-e^{-t})dt$.

In addition, since $\sum_{i=1}^k 1=k$, we have
\begin{equation}\label{den-constr}
\beta=\int_0^\beta dt=\int_0^{\alpha+\beta}\rho(\mu)d\mu
\end{equation}

The functional $S[\rho]$ is convex and therefore has unique minimum.

It is easy to find the variation of this functional:
\begin{multline}
\delta S[\rho]=\int_0^{\alpha+\beta}\left[{1\over
  2}\mu(\mu-2\beta)+L(\mu)+L(\alpha+\beta-\mu)\right]
  \delta\rho(\mu)d\mu \\
  -\dashint_0^{\alpha+\beta}\dashint_0^{\alpha+\beta}
\ln\vert e^{-\al\mu}-e^{-\al\nu}\vert\rho(\nu)d\nu
\delta\rho(\mu)d\mu
\end{multline}
The second variation is positive definite:
\[
\delta^2 S[\rho]=-\dashint_0^{\alpha+\beta}\dashint_0^{\alpha+\beta}
\ln\vert e^{-\mu}-e^{-\nu}\vert\delta\rho(\nu)
\delta\rho(\mu)d\mu d\nu
\]
Therefore the minimum of $S[\rho]$ is achieved at its critical point. Let us find the critical point of $S$ which satisfies the constraint (\ref{den-constr}). The convexity of $S$ guarantees its uniqueness.

The condition (\ref{den-constr}) implies the constraint on
the variation of $\rho$:
\[
\int_0^{\alpha+\beta} \delta\rho(\mu)\, d\mu=0
\]
Taking it into account, we obtain the integral equation
for critical points of $S[\rho]$:
\[
{1\over
2}\mu(\mu-2\beta)+L(\mu)+L(\alpha+\beta-\mu)-\dashint_0^{\alpha+\beta}
\ln\vert e^{-\mu}-e^{-\nu}\vert\rho(\nu)\,d\nu=const
\]
We want to solve it on the space of functions $\rho$ satisfying the constraint (\ref{den-constr}).

Differentiating this integral equation and taking into account (\ref{den-constr}) we arrive at
\[
\ln({1-e^{-\mu}\over e^{-\mu}-e^{-\alpha-\beta}})=
\dashint_0^{\alpha+\beta} {\rho(\nu)\,d(e^{-\nu})\over e^{-\mu}-e^{-\nu}}
\]

After changing variables to $u=e^{-\nu}$ and
$L=e^{-\alpha-\beta}$ the equation for the density $\sigma(u)=\rho(\nu)$ becomes
\begin{equation}\label{int-eqn}
\ln({z-L\over 1-z})=\dashint_L^1{\sigma(u)\,du\over z-u}
\end{equation}
where the function $\sigma$ satisfies the  condition
\begin{equation}\label{cond}
\int_L^1{\sigma(u)du\over u}=\beta
\end{equation}

\subsubsection{} Now let us solve (\ref{int-eqn}) on the surface
of the constraint (\ref{cond})
explicitly.

\begin{proposition}
The unique solution to (\ref{int-eqn}) subject to the constraint
(\ref{cond}) is
\begin{equation}\label{newrho}
\rho(\mu)=\sigma(e^{-\mu})=
{2\over \pi}{\rm Arctan}\left( e^{-\mu}
\sqrt{{\rm sinh}\, {(\mu-A)\over 2}\,  {\rm sinh}\, {(B-\mu)\over 2} \over
{\rm sinh}\, { A\over 2}\,  {\rm sinh}\, { B\over 2} } \right)
\end{equation}
where
\begin{eqnarray}\label{abnew}
e^{-B}&=& e^{-{\al+\beta\over 2}}\left({\rm cosh}\,{\al+\beta\over 2}
-\sqrt{{\rm sinh}\,{\beta\over 2}\, {\rm sinh}\, {2\al+\beta\over 2}}\right)\nonumber \\
e^{-A}&=& e^{-{\al+\beta\over 2}}\left({\rm cosh}\,{\al+\beta\over 2}
+\sqrt{{\rm sinh}\,{\beta\over 2}\, {\rm sinh}\, {2\al+\beta\over 2}}\right)
\end{eqnarray}
\end{proposition}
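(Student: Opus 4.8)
The plan is to solve \eqref{int-eqn} as a \emph{finite Hilbert transform} (airfoil) equation and to fix the free parameters from the normalization \eqref{cond}; uniqueness is not in question, since the convexity of $S[\rho]$ established above already guarantees at most one solution, so it suffices to exhibit one. First I would record the structural input suggested by the target formula \eqref{newrho}: the density stays strictly between $0$ and $1$ and vanishes at the two edges $\mu=A,B$, so in the variable $u=e^{-\mu}$ the density $\sigma$ is supported on a single band $[a,b]:=[e^{-B},e^{-A}]\subset(L,1)$ with no saturated ($\sigma\equiv1$) region. Consequently the principal-value integral in \eqref{int-eqn}, which a priori runs over $[L,1]$, collapses to the band, and \eqref{int-eqn} becomes the airfoil equation $\dashint_a^b \frac{\sigma(u)}{z-u}\,du=f(z)$ for $z\in(a,b)$, with explicit right-hand side $f(z)=\ln\frac{z-L}{1-z}$.

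Next I would invert this equation by Tricomi's formula in the class of solutions that vanish like a square root at both endpoints. In that class the solution is unique and reads
\[
\sigma(z)=\frac{\sqrt{(z-a)(b-z)}}{\pi^2}\,\dashint_a^b\frac{f(t)\,dt}{\sqrt{(t-a)(b-t)}\,(t-z)},
\]
and its very existence imposes the single solvability (orthogonality) condition
\[
\int_a^b\frac{f(t)\,dt}{\sqrt{(t-a)(b-t)}}=0 .
\]
Equivalently, one may argue through the resolvent $W(z)=\int_a^b\frac{\sigma(u)}{z-u}\,du$: dividing by $R(z)=\sqrt{(z-a)(z-b)}$ turns the additive boundary relation $W_++W_-=2f$ on $(a,b)$ into a scalar Cauchy problem, the vanishing of the constant term of $W$ at infinity being exactly the orthogonality condition above, while boundedness at $a,b$ is automatic.

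This leaves the two endpoints (equivalently $A,B$) to be fixed by two equations: the orthogonality condition just written, and the constraint \eqref{cond}, which in the $u$-variable is $\int_a^b \sigma(u)\,u^{-1}\,du=\beta$, i.e. $W(0)=-\beta$. Both of these, as well as the Tricomi integral defining $\sigma$, are integrals of the type $\int_a^b \frac{\ln\frac{t-L}{1-t}}{\sqrt{(t-a)(b-t)}}\,(\cdots)\,dt$, which I would evaluate by deforming onto a large contour and collecting the contributions of the branch points of the logarithm at $t=L$ and $t=1$ and of the point at infinity. Carrying out the two evaluations yields a pair of algebraic relations between $a,b$ and $L=e^{-\alpha-\beta}$, whose solution produces the closed forms \eqref{abnew}. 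The main obstacle is precisely this last, purely computational layer: evaluating the principal-value Tricomi integral in closed form and then simplifying it, after the change of variables $u=e^{-\mu}$ and the substitutions $a=e^{-B},\,b=e^{-A}$, into the compact trigonometric expression \eqref{newrho}; the key identities rewrite $\sqrt{(z-a)(b-z)}$ and the logarithm in terms of $\sinh\frac{\mu-A}{2}$, $\sinh\frac{B-\mu}{2}$ and $\sinh\frac A2$, $\sinh\frac B2$. Finally I would verify a posteriori that the resulting $\rho$ indeed lies in $[0,1]$ and that $[a,b]\subset(L,1)$, confirming that the single-band ansatz with no saturated region was the correct one.
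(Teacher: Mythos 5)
Your proposal is correct and follows essentially the same route as the paper's proof: a single-band ansatz for $\sigma$ supported on $[a,b]$, inversion of the singular integral equation via the resolvent divided by $\sqrt{(z-a)(z-b)}$ (the paper derives Tricomi's formula this way rather than quoting it), with the endpoints fixed by the orthogonality condition coming from the $O(1/z)$ decay at infinity together with the normalization $F(0)=-\beta$. The only differences are cosmetic: the paper evaluates the resulting integrals by explicit closed-form identities (and spots the shortcut $a+b=1+L$ for the orthogonality condition), where you propose contour deformation around the branch points at $t=L$, $t=1$, and infinity.
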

\begin{proof}
Assume the support of $\sigma$ is the interval $[a,b]\subset
[L,1]$ and consider the function
\[
F(z)=\int_a^b {\sigma(u)\,du\over z-u}
\]
on the complex plane with the branch cut along $[a,b]$. It is a
meromorphic function whose boundary values at the interval $[a,b]$
satisfy the relation
\[
2\ln({z-L\over 1-z})=F(z+i0)+F(z-i0)
\]
and $F(z)=O({1\over z})$ as $|z|\to \infty$.

To solve this equation explicitly consider the function
$G(z)={1\over \sqrt{(z-a)(z-b)}}F(z)$. It is a meromorphic
function on the complex plane with the branch cut along $[a,b]$.
But now at the branch cut we have
\[
G(z+i0)-G(z-i0)=2{1\over \sqrt{(z-a)(z-b)}}\ln({z-L\over 1-z})
\]
and $G(z)=O({1\over z^2})$ as $|z|\to \infty$. From here we
obtain:
\begin{equation}\label{F-int}
F(z)={1\over \pi}\sqrt{(z-a)(z-b)} \int_a^b {1\over
\sqrt{(u-a)(b-u)}} \,
 {\rm Log}{u-L\over 1-u}\,{du \over z-u}\,
\end{equation}

As $|z|\to \infty$ this function behave as
\[
F(z)={1\over \pi}\int_L^1 {1\over \sqrt{(u-a)(b-u)}} \,
 {\rm Log}{u-L\over 1-u}\,du \, + O({1\over z})
\]
This gives the first equation for $a,b$:
\[
\int_L^1 {1\over \sqrt{(u-a)(b-u)}} \,
 {\rm Log}{u-L\over 1-u}\,du=0
\]
It is easy to see that the integral vanishes when $a+b=1+L$.

Changing the integration variable in (\ref{F-int}) $u=(b-a)s+a$
and taking into account the relation between $a$ and $b$ we obtain the
following integral formula for $F(z)$:
\[
F(z)={1\over \pi}\sqrt{(z-a)(z-b)} \int_0^1 {1\over
\sqrt{s(1-s)}} \,
 {\rm Log}{s+M\over 1+M-s}\,{ds \over t-s}\,
\]
where $t={z-a\over b-a}$, and $M={a-L\over 1+L-2a}$.

This integral can be computed explicitly using
\[
\int_0^1{1\over (w-u)(\beta+u)\sqrt{u(1-u)}}du={\pi\over \beta+w}\left({1\over \sqrt{\beta(\beta+1)}}+{1\over \sqrt{w(w-1)}}\right)
\]
and we obtain:
\[
F(z)= \ln{z-L\over 1-z}+2i{\rm Arctan}\sqrt{(z-a)(b-z)\over
  (a-L)(1-a)}
\]

The second equation determining $a$ and $b$ is the normalization
(\ref{den-constr}). It gives $F(0)=-\beta$ and therefore:
\[
\sqrt{(1-a)(1-b)\over ab}={\rm tanh} {\alpha\over 2}
\]

This gives $a$ and $b$:
\begin{eqnarray}
a&=& e^{-{\al+\beta\over 2}}\left({\rm cosh}\,{\al+\beta\over 2}
-\sqrt{{\rm sinh}\,{\beta\over 2}\, {\rm sinh}\, {2\al+\beta\over 2}}\right)\nonumber \\
b&=& e^{-{\al+\beta\over 2}}\left({\rm cosh}\,{\al+\beta\over 2}
+\sqrt{{\rm sinh}\,{\beta\over 2}\, {\rm sinh}\, {2\al+\beta\over 2}}\right)
\end{eqnarray}

The density $\rho(\mu)$ can be easily derived from $F(z)$
as $\rho(\mu)=Im F(e^{-\mu})$ when $\mu\in [B,A]$, and  $B=-{\rm Log}\, b$, $A=-{\rm Log}\, a$:
\begin{equation}
\rho(\mu)=\sigma(e^{-\mu})=
{2\over \pi}{\rm Arctan}\left( e^{-\mu}
\sqrt{{\rm sinh}\, {(\mu-A)\over 2}\,  {\rm sinh}\, {(B-\mu)\over 2} \over
{\rm sinh}\, { A\over 2}\,  {\rm sinh}\, { B\over 2} } \right)
\end{equation}
\end{proof}

Because the solution $\rho$ described above is a local minimum
of the convex functional $S$. it is also a global minimum.

The density $\rho(\mu)$ is plotted on Fig.\ref{fig:densitq} for $\lambda=1/2,1,2,5$
and $\beta=0,1/2,1,2,5$. Here $\lambda=\alpha/\beta$.

\begin{figure}
\centering
\includegraphics[width=10.cm]{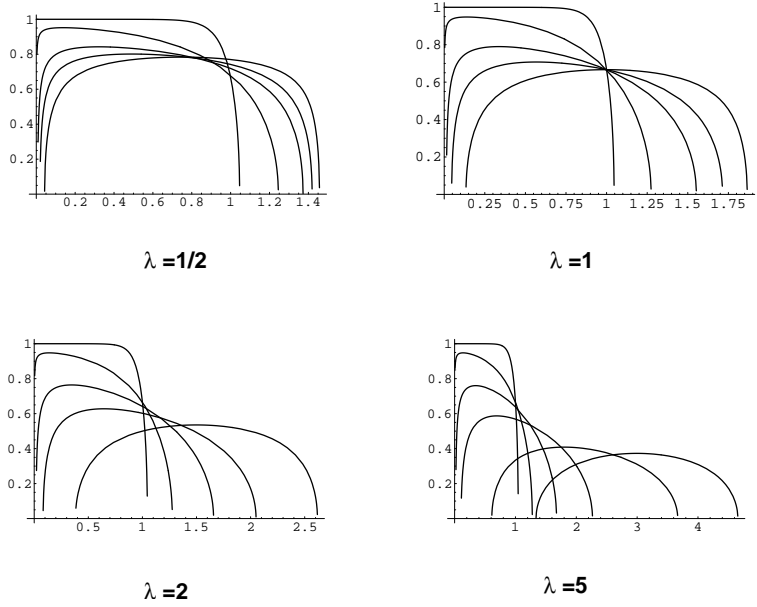}
\caption{Distribution of endpoints for $\lambda=\al/\beta
=1/2,1,2,5$ and
$\beta=0,1/2,1,2,5$ (from right to left, in each plot).}\label{fig:densitq}
\end{figure}

\subsubsection{}
The limit when $\al=\beta\lambda$ and $\al\to 0$ while $\lambda$ is fixed correspond to the limit shape for
uniformly weighted lattice paths, i.e to $q=1$.

In this limit $A=\beta A_0$, $B=\beta B_0$ where

\begin{eqnarray}\label{uni-bound}
B_0(\lambda)&=&{1+\lambda \over 2} -\sqrt{{1\over 2}(1+{\lambda\over 2})}\\
A_0(\lambda)&=&{1+\lambda \over 2} +\sqrt{{1\over 2}(1+{\lambda\over 2})}
\end{eqnarray}
The density of endpoints on the free boundary converges to
\begin{equation}\label{uni-densit}
\rho(t)={2\over \pi}{\rm Arctan}\left(\sqrt{{(t-A_0)(B_0-t)\over A_0B_0}}\right)={2\over \pi}{\rm Arctan}\left({\sqrt{2\lambda+1-4(t-{\lambda +1\over 2})^2}\over \lambda}\right)
\end{equation}
here $t=\mu/\beta$.

\subsubsection{}
Another interesting limit is when $\al\to \infty$ while $\beta$ is fixed. In this case $\rho$ is given by the same formula
as before but for the boundary points we have:
\[
e^{-B}\to {1\over 2} (1-\sqrt{1-e^{-\beta}})
\]
\[
e^{-A}\to {1\over 2} (1+\sqrt{1-e^{-\beta}})
\]

Note finally that all the above is valid for $\al<0$ as well.
The transformation $\al\to-\al$ amounts to reflecting the plane partition w.r.t. the axis $y=x$ as in (\ref{q-symm}).

\section{Case of a cut hexagon with forbidden intervals along the free boundary}\label{triangles}

In the previous section we studied boundary partition for
the distribution $q^{|S|}$. From now on we will consider the
uniform distribution. All results can be easily generalized to
$q\neq 1$  but we will not do it here.

\subsection{Enumeration}

\begin{figure}
\centering
\includegraphics[width=9.cm]{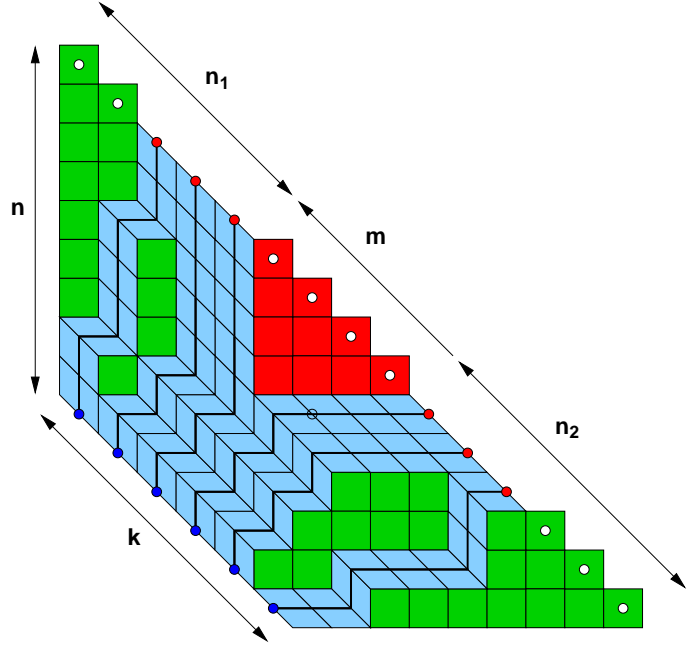}
\caption{The domain to be tiled is a half-hexagon $n\times k\times n$
with the fixed interval of $m$ square tiles along the boundary.}\label{fig:tricut}
\end{figure}

We consider the domain of Fig.\ref{fig:tricut}, made of a cut hexagon $n\times k\times n$
with a fixed interval of square tiles along the boundary. The effect of these tiles is that lattice paths are forbidden to have endpoints along a segment of the form $\{(i,n-i),\ i=n_1,n_1+1,\ldots n_1+m-1\}$, while
$n_1+m+n_2=n+k$.

It is clear that the counting formula is the same
Gessel-Viennot formula as in the previous problem with $q=1$:
\begin{equation}\label{unif-count}
Z_{k,n}(\bm)=\det_{1\leq i,j\leq k}
 {n \choose m_j-i+1}
\end{equation}
It can be written in the multiplicative form:

\begin{equation}\label{proqZprime}
Z_{k,n}(\bm)=\prod_{1\leq i,j \leq k}(m_i-m_j) \,
\prod_{i=1}^k {[n+k-i]!\over [m_i]! [n+k-m_i-1]!}
\end{equation}

The difference with the previous case is that positions
of end points of lattice paths $m_i$ now can only be in
$[0,n+k-1]\setminus [n_1,n_1+m-1]=[0,n_1-1]\cup [n_1+m,n+k-1]$.
The interval $[n_1,n_1+m-1]$ is forbidden.

\subsection{Large size asymptotics}
Here we will study the distribution of endpoints in the limit when $k\to \infty$ and $n=\lambda k$, $n_1=\lambda_1 k$, $n_1+m=\lambda_2 k$, $m_i=\mu_i k$, for some finite $\lambda,\lambda_1$, and $\lambda_2$.

The large $k$ asymptotics of the number of lattice paths with endpoints
fixed outside of the forbidden intervals is given by the action
function

\begin{eqnarray}\label{action-uni}
S_\lambda[\rho]&=&\int_0^{\lambda+1}({1\over
  2}\mu(\mu-2\lambda)+L_0(\mu)+L_0(\lambda+1-\mu)) \rho(\mu)d\mu \\
&&-{1\over 2}\dashint_0^{\lambda+1} \dashint_0^{\lambda+1} \ln\vert
\mu-\nu\vert\rho(\mu)\rho(\nu)\,d\mu\,d\nu
\end{eqnarray}
where $L_0(x)=x\Lo(x)-x$.

This is a convex functional. The same arguments as
in the previous section shows that its extremum is the
unique minimum. Assume that the forbidden segment
$[\lambda_1,\lambda_2]$ is in a generic position which means
that the region where the density is not constant
(not $0$ or $1$) consists of two segments $T=[a,b]\cup[c,d]$, such that $0<a<b< \lambda_1<\lambda_2<c<d<\lambda+1$. The density is
constant outside of $T$:
\[
\rho(\mu)=0, \ \ \mu\in [0,a]\cup[d,\lambda +1], \ \ \rho(\mu)=1, \ \ \mu\in [b,\lambda_1]\cup [\lambda_2,c]
\]

The density $\rho(\mu)$ which minimizes $S[\rho]$
satisfies the integral equation
\begin{equation}\label{cut-hex-2t}
{\rm Log}{u(b-u)(u-\lambda_2)\over (\lambda+1-u)(\lambda_1-u)(u-c)}=
\dashint_T {\rho(\mu)d\mu\over u-\mu}
\end{equation}
$u\in T$ with the additional constraint:
\[
\int_T\rho(u)du+(\lambda_1-b)+(c-\lambda_2)=1
\]

The solution to the integral equation \eqref{cut-hex-2t} with
the additional constraints can be obtained in the
similar way as before by converting the integral equation to
the problem of finding the holomorphic function with
given boundary values. For the resolvent
\[
F(z)=\int_T{\rho(u)du\over z-u}
\]
we obtain the following integral expression
\begin{eqnarray}\label{one-int-F}
F(z)&=&{1\over \pi}\sqrt{(z-a)(z-b)(z-c)(z-d)}\\
&&\times \, \int_T {du \over \sqrt{(b-u)(u-a)(c-u)(d-u)}}\,
{1\over z-u}\, {\rm Log}\,{u(b-u)(u-\lambda_2)\over (\lambda+1-u)(\lambda_1-u)(u-c)}
\end{eqnarray}

The end points $a,b,c,d$ are determined by the asymptotic
behavior $F(z)= (1+\lambda_2-\lambda_1+b-c)/z+O(1/z^2)$ at large $z$.
Indeed, if there were no conditions on $a,b,c,d$ the resolvent would
behave as $z^2$ for large $z$. Vanishing coefficients
in $z^2$, $z$, $z^0$, and fixing the coefficient in $z^{-1}$ give
four equations. If the parameters $\lambda, \lambda_1,\lambda_2$
are in generic position, these equations have a real
solution, which gives the desired values of $a,b,c,d$
as functions of $\lambda, \lambda_1,\lambda_2$.

The density is given by the imaginary part of the boundary value for the resolvent on $T$:
\begin{equation}\label{den-one-int}
\rho(z)={\rm Im}\, F(z+i0)/\pi, \ \ z\in T
\end{equation}

For sufficiently small values of $\lambda_1$ the interval $[a,b]$ may
shrink to a point. The same can happen with the interval $[c,d]$ for
sufficiently large values of $\lambda_2$. Both intervals shrink to a
point when $\lambda_2-\lambda_1=\lambda$. It is impossible to have
$\lambda_2-\lambda_1>\lambda$  because in this case the number of
lattice paths incoming to the region would be greater then the number
of paths outgoing from it.

\subsection{Case of many forbidden intervals}

This is a straightforward generalization of the previous case to a
succession of forbidden intervals along the free boundary.

The number of lattice paths is still given by the formula
\eqref{unif-count}. The only difference is that all
$m_i$ should be outside of the forbidden intervals.

Denote by $[\al_i,\beta_i]$ the forbidden intervals $i=1,\dots, l$.
The density of the outgoing paths is supported on
$\cup_{i=1}^{l+1}[\beta_{i-1},\al_i]$ where $\beta_0=0$ and $\al_{l+1}=\lambda+1$.
Let us say $\al_i,\beta_i$ are generic when the minimizer of the rate functional $S[\rho]$
is constant on the following intervals:
\[
\rho(u)=1, u\in \cup_{i=2}^{l+1} [\beta_{i-1}, a_i]\cup [b_{i-1},\al_{i-1}], \ \
\rho(u)=0, u\in [0,a_1]\cup [b_{l+1},\lambda +1]
\]

The integral equation for the minimizer is
\[
{\rm Log}\left( {\mu \over \lambda+1-\mu}\prod_{i=2}^{l+1} {(\mu-b_{i-1})(\mu-\beta_{i-1})\over
(\al_{i-1}-\mu)(a_i-\mu)}\right)
=\dashint_T dw{\rho(w)\over \mu-w}
\]
where $T=\cup_i [a_i,b_i]$.

The minimizer satisfies the constraint (the number of incoming paths is equal to
the number of outgoing paths):
\[
\int_T\rho(u)du+\sum_{i=2}^{l+1}(a_i-\beta_{i-1})+
\sum_{i=1}^{l}(\al_i-b_{i})=1
\]

The solution to this integral equation is given by the
the resolvent function
$$
F(z)={1\over \pi}
\int_T du\prod_i{\sqrt{(z-a_i)(z-b_i)} \over \sqrt{(b_i-u)(u-a_i)}}\,
{1\over z-u}\, {\rm Log}\,{u\over \lambda+1-u}\prod_i {(u-b_{i-1})(u-\beta_{i-1})\over
(\al_{i-1}-u)(a_i-u)}
$$
as usual:
\[
\rho(z)={\rm Im}\, F(z+i0)/\pi, \ \ z\in T
\]

\subsection{The case of forbidden and fully packed intervals}
\subsubsection{}
Forbidden intervals are just one way to partially enforce boundary
conditions along segments of the free boundary, where one imposes $\rho=0$.
Another special case is that of {\it fully packed} intervals, along which every point is an end point of a lattice path,
hence $\rho=1$ is imposed.

We cannot enforce a given type of tiles along a fully packed interval.
However the geometry of tiles is such that each fully packed interval has a sequence of non-square tiles which start with
horizontally tilted tiles at the top and then horizontal tiles
turn at some point into vertical non-square tiles. The position of the turning
point is random: it is not fixed by the requirement that the interval is fully packed.

The counting formula for the number of lattice paths is the
same as before and we can compute the limit density of
end points of lattice paths along the free part of the boundary
similarly to the case when we have only forbidden intervals.

\subsubsection{}The problem of finding the limit density for one fully packed interval is very similar to the problem for one forbidden interval.
Lattice paths come to the fully packed interval horizontally
at the upper part of the interval and vertically at the lower
part of the interval. Such a configuration induces forbidden intervals at both ends of the fully packed interval.
Let $l_1$ and $l_2$ be end-points of the fully packed interval.
In the limit when $k,n, l_1, l_2\to \infty$ such that $\lambda=n/k, \lambda_i=l_i/k$ are finite the resulting density of lattice paths has the following structure
\[
\rho(\mu)=0, \ \ \mu\in [0,a]\cup[d,\lambda +1], \ \ \rho(\mu)=0, \ \ \mu\in [b,\lambda_1]\cup [\lambda_2,c]
\]
Here $0<a<b<\lambda_1<\lambda_2<c<d<1+\lambda$, $\rho(\mu)=1$
when $\mu\in [\lambda_1,\lambda_2]$ and $\rho(\mu)$ is a smooth
functions when $\mu\in T=[a,b]\cup[c,d]$. It satisfies the
constraint
\[
\int_T\rho(\mu)d\mu +(\lambda_2-\lambda_1)=1
\]
The limit density on $T$ is determined by the variational problem similar to the one described above and the solution is
given by the resolvent function eq. \eqref{one-int-F}.
The density is equal to the imaginary part of the resolvent
on $T$ eq. \eqref{den-one-int}.
The endpoints $a,b,c,d$ are determined by the condition
$F(z)=(1+\lambda_1-\lambda_2)/z+O(1/z^2)$.

When a fully packed interval is near any boundary of the whole interval $[0, \lambda+1]$, the lattice paths
form a ``fully packed region" along the corresponding side of the
cut hexagon. In this case the problem of finding the limit density of paths is equivalent 
to the problem for a cut hexagon of smaller size.

\subsubsection{} The generic case of
several fully packed and forbidden intervals corresponds to an alternance
of the intervals, apart from one-another. All
other situations are degenerations of this one.

The equation for the density in this case is similar to the
case when we have several forbidden intervals. Let $[\alpha_i,
\beta_i]$ be forbidden intervals and $[\lambda_i,\nu_i]$ be fully packed intervals. Assume that $\beta_i<\lambda_i$ for all $i=1,\dots, n$. When the boundaries  of intervals are in generic position the density will vary smoothly in intervals $[a_i, b_i]$,
$[c_i,d_i]$, $i=1,\dots, n$ where $\dots< b_i<\alpha_i<\beta_i<c_i<d_i<\lambda_i<\mu_i<a_{i+1}<b_{i+1}<\dots$,
and $d_n<\lambda+1$. The density is $1$ in fully packed intervals and $0$ in the forbidden intervals. In addition in is $1$ in intervals $[b_i,\alpha_i]$, $[\beta_i, c_i]$, and it is $0$
in the intervals $[d_i,\lambda_i]$, $[\nu_i, a_{i+1}]$, $[0,a_1]$, and $[d_n,\lambda+1]$.

The density function satisfies the constraint:
\[
\sum_{i=1}^n \int_T\rho(\mu)d\mu
+\sum_{i=1}^n((\alpha_i-b_i)+(c_i-\beta_i)+(\nu_i-\lambda_i))=1
\]
where $T=\cup_{i=1}^n [a_i, b_i]\cup[c_i,d_i]$
The solution to the integral equation for the minimizer is given by the resolvent:
\begin{eqnarray*}
F(z)&=&{1\over \pi}
\int_T du\prod_{i=1}^n{\sqrt{(z-a_i)(z-b_i)(z-c_i)(z-d_i)} \over \sqrt{(b_i-u)(u-a_i)(u-c_i)(u-d_i)}} \\
&&\ \ \ \ \ \ \times \,{1\over z-u}\, {\rm Log}\,{u\over \lambda+1-u}\prod_{i=1}^n {(u-b_{i})(u-\beta_{i})(u-\lambda_i)\over
(u-\alpha_i)(u-c_i)(u-\mu_i)}
\end{eqnarray*}
The constraint on the density is equivalent to the following condition on the
asymptotics of $F(z)$:
\[
F(z)=(1-\sum_{i=1}^n((\alpha_i-b_i)+(c_i-\beta_i)+(\nu_i-\lambda_i)))/z+O(1/z^2)
\]
which gives equations defining $a_i,b_i,c_i,d_i$.
The density is given by the imaginary part of $F(z)$
on $T$ as in eq. \eqref{den-one-int}.

\subsection{Case of two forbidden intervals at the corners}
Here we consider the case of two forbidden intervals that touch respectively the top and the bottom
of the free boundary. We will see that in this case the limit density can be computed explicitly in terms elementary functions.

Assume that the upper interval has length $m$ and that the lower one is of length $p$.
Now the values $m_i\in [0,m]$ and $m_i\in [n+k-p-1,n+k-1]$ are
forbidden, and hence
the allowed range for endpoints of lattice paths is  $[m+1,n+k-p-2]$.

Now we will take $k\to \infty$ such that
$n=\lambda k$, $m=\nu k$, $n+k-p=\theta k$, which finite $\lambda,
\nu, \theta$ and will compute the minimizer of the rate function
$S[\rho]$.

\subsubsection{}

For generic $\nu, \theta$ the minimizer of the rate functional
is not constant on the interval $[a,b]$ with $\nu<a<b<\theta$,
and
\[
\rho(u)=1, u\in [\nu,a]\cup [b,\theta]
\]

It satisfies the integral equation
\begin{equation}\label{2FI-eqn}
{\rm Log}\left( {\mu (\mu-a)(\theta-\mu)\over (\lambda+1-\mu)(\mu-\nu)(b-\mu)}\right)
=\dashint_a^b dw{\rho(w)\over \mu-w}
\end{equation}
and the constraint
\begin{equation}\label{norhod}
\int_a^b\rho(w)dw=1-(a-\nu)-(\theta-b)
\end{equation}

In this case the integrals can be computed explicitly.
\begin{proposition}
The density is given by
\begin{eqnarray}\label{2FI-dens}
&&\rho(z)=1+{2\over \pi} \left(  {\rm Arctan}\, \sqrt{b-z\over z-a}\sqrt{a-\nu\over b-\nu}
- {\rm Arctan}\, \sqrt{b-z\over z-a}\sqrt{a\over b} \right. \label{densitwotrip} \\
&&\qquad\qquad \left.
+{\rm Arctan}\, \sqrt{z-a\over b-z}\sqrt{\theta-b\over \theta-a}
- {\rm Arctan}\, \sqrt{z-a\over b-z}\sqrt{\lambda+1-b\over \lambda+1-a}\, \right)\nonumber \\
\end{eqnarray}
with
\begin{equation}\label{triab}
a={1\over 4\lambda^2}(U(\lambda,\nu,\theta)-\sqrt{V(\lambda,\nu,\theta)}) \qquad
b={1\over 4\lambda^2}(U(\lambda,\nu,\theta)+\sqrt{V(\lambda,\nu,\theta)})
\end{equation}
where $U$ and $V$ are the following polynomials:
\begin{eqnarray*}
U(\lambda,\nu,\theta)&=&  (\nu+\theta)\Big(1-(\nu-\theta)^2\Big) -(\lambda+1)
\Big((1-\nu-\theta)(1+2\lambda+\nu+\theta)+4 \nu \theta\Big)\\
V(\lambda,\nu,\theta)&=&
(1-\nu-\theta)(1-\nu+\theta)(1+\nu-\theta)\\
&&\qquad \times (1+\nu-\theta+2\lambda)(1-\nu+\theta+2\lambda)(1-\nu-\theta+2\lambda)
\end{eqnarray*}
\end{proposition}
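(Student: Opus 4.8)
\emph{Proof strategy.} The plan is to mirror the resolvent argument of the first Proposition, now specialized to the single varying interval $T=[a,b]$ with the driving term of \eqref{2FI-eqn}. First I would introduce the resolvent
\[
F(z)=\int_a^b\frac{\rho(w)\,dw}{z-w},
\]
holomorphic off $[a,b]$ with $F(z)=O(1/z)$ at infinity. By the Plemelj relations the principal-value equation \eqref{2FI-eqn} becomes the additive boundary condition $F(\mu+i0)+F(\mu-i0)=2g(\mu)$ on $[a,b]$, where $g(\mu)={\rm Log}\,\frac{\mu(\mu-a)(\theta-\mu)}{(\lambda+1-\mu)(\mu-\nu)(b-\mu)}$, while the density is recovered from the boundary values as in \eqref{den-one-int}. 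Dividing by $\sqrt{(z-a)(z-b)}$, which changes sign across the cut, converts this into a pure jump problem solved by the Cauchy transform; since $F=O(1/z)$ forces the entire part to vanish, one obtains exactly \eqref{one-int-F} with $T=[a,b]$.

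The key simplification is to peel off the frozen part of the density. Because $\dashint_a^b\frac{dw}{\mu-w}={\rm Log}\,\frac{\mu-a}{b-\mu}$, the factors $(\mu-a)$ and $(b-\mu)$ in $g$ are precisely the contribution of a constant density $\rho\equiv1$ on $[a,b]$. I would therefore write $\rho=1+\tilde\rho$, so that $\tilde\rho$ solves the same equation with the reduced four-term driving function $\tilde g(\mu)={\rm Log}\,\frac{\mu(\theta-\mu)}{(\lambda+1-\mu)(\mu-\nu)}$. This accounts for the leading $1$ in \eqref{2FI-dens} and reduces everything to four building-block integrals $J(z;x)=\frac{\sqrt{(z-a)(z-b)}}{\pi}\int_a^b\frac{{\rm Log}(u-x)}{(z-u)\sqrt{(u-a)(b-u)}}\,du$ for $x\in\{0,\nu,\theta,\lambda+1\}$. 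Differentiating in $x$, applying the standard evaluation $\int_a^b\frac{du}{(w-u)\sqrt{(u-a)(b-u)}}=\frac{\pi}{\sqrt{(w-a)(w-b)}}$ together with a partial-fraction split, gives $\partial_x J=\frac{\sqrt{(z-a)(z-b)}}{(z-x)\sqrt{(x-a)(x-b)}}-\frac{1}{z-x}$; integrating back produces a logarithm and an inverse hyperbolic tangent. On the cut $z\in(a,b)$ the radical is imaginary, the $\mathrm{arctanh}$ becomes an ${\rm Arctan}$, and taking $\frac1\pi{\rm Im}$ yields ${\rm Arctan}\,\sqrt{\frac{b-z}{z-a}}\sqrt{\frac{a-x}{b-x}}$ when $x<a$ and ${\rm Arctan}\,\sqrt{\frac{z-a}{b-z}}\sqrt{\frac{x-b}{x-a}}$ when $x>b$. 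Substituting $x=\nu,0$ and $x=\theta,\lambda+1$ and tracking the signs carried by $\tilde g$ reproduces the four arctangents of \eqref{2FI-dens} exactly.

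It remains to pin down $a$ and $b$, which are fixed by the large-$z$ behavior of $F$. Since $\sqrt{(z-a)(z-b)}\sim z$, the solution formula gives $F(z)\to\frac1\pi\int_a^b\frac{g(u)\,du}{\sqrt{(u-a)(b-u)}}$, so consistency with $F=O(1/z)$ forces this constant to vanish, and the next coefficient must equal $\int_a^b\rho=1-(a-\nu)-(\theta-b)$ from \eqref{norhod}; equivalently $F(z)=\big(1-(a-\nu)-(\theta-b)\big)/z+O(1/z^2)$, exactly the type of condition used to determine endpoints in the analogous four-point case. Evaluating these two moments with the elementary formula $\int_a^b\frac{{\rm Log}(u-x)}{\sqrt{(u-a)(b-u)}}\,du=\pi\,{\rm Log}\,\frac{|2x-a-b|+2\sqrt{(x-a)(x-b)}}{4}$ turns them into two algebraic relations for the symmetric functions $a+b$ and $ab$. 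Solving this quadratic system is where the real work lies, and it is the step I expect to be the main obstacle: one must verify that $a+b=U/(2\lambda^2)$ and $ab=(U^2-V)/(16\lambda^4)$ with $U,V$ the stated polynomials, so that $a,b$ are the two roots in \eqref{triab}. Finally, the convexity of $S[\rho]$ established in the previous section guarantees that the critical density thus constructed is the unique global minimizer, which completes the proof.
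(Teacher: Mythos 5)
Your proposal is correct, and its skeleton is the same as the paper's: introduce the resolvent $F(z)=\int_a^b\rho(w)\,dw/(z-w)$, solve the boundary-value problem $F_++F_-=2g$ by the square-root-weighted Cauchy transform, recover $\rho$ as ${\rm Im}\,F(z+i0)/\pi$, and fix $a,b$ from the vanishing constant term and the prescribed $1/z$ coefficient of $F$ at infinity; the paper executes exactly this after rescaling to $[0,1]$, using the identities \eqref{firstint}, \eqref{secondint} and \eqref{ABint}. Where you genuinely differ is in the bookkeeping, and your version is arguably cleaner: the splitting $\rho=1+\tilde\rho$, justified by $\dashint_a^b dw/(\mu-w)={\rm Log}\,\frac{\mu-a}{b-\mu}$, cancels the $(\mu-a)$ and $(b-\mu)$ factors of the driving term, so the leading $1$ and the four-arctangent structure of \eqref{2FI-dens} are explained a priori, whereas in the paper the $1$ only emerges from tracking branches of the full six-factor resolvent; likewise you derive the single-log building block by differentiating in the parameter $x$ and integrating back, instead of quoting the composite identity \eqref{ABint} (here be careful that your sign for $\partial_x J$ presumes the branch of $\sqrt{(x-a)(x-b)}$ that behaves like $x$ at infinity, hence is negative for real $x<a$). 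Two points remain short of complete. First, your quoted moment formula only handles the constant-term condition; the $1/z$-coefficient condition requires the $u$-weighted integral $\int_a^b u\,{\rm Log}|u-x|\,du/\sqrt{(u-a)(b-u)}$, i.e.\ the analogue of the paper's \eqref{secondint}, which you neither state nor derive. Second, the reduction of the two resulting relations for $a+b$ and $ab$ to the polynomials $U,V$ of \eqref{triab} — which you correctly identify as the heavy step — is left unverified; to be fair, the paper also asserts this algebra rather than displaying it, so on this point your write-up matches the paper's level of detail.
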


\begin{proof} The resolvent for \eqref{2FI-eqn} is given by the integral
$$
F(z)={\sqrt{w(w-1)}\over \pi}\int_0^1 {dv\over \sqrt{v(1-v)}}\, {1\over w-v} \,
{\rm Log}\left({(\beta+v)v(\eta-v)\over (\gamma-v)(\delta+v)(1-v)}\right)
$$
where $w={z-a\over b-a}$ and
\begin{equation}\label{valstrid}
\beta={a\over b-a}, \quad \gamma={\lambda+1-a\over b-a},\quad \delta={a-\nu\over b-a},
\quad \eta={\theta-a\over b-a}
\end{equation}

The normalization condition \eqref{norhod} for the density is equivalent to large $z$ asymptotic behavior for the resolvent
$F(z)={1+b-a+\nu-\theta\over z}+O({1\over z^2})$. Vanishing of the coefficient in $z$ and the constant term of this asymptotic  amounts to:
\begin{eqnarray}\label{van-1}
0=\int_0^1 {dv\over \sqrt{v(1-v)}}\,
{\rm Log}\left({(\beta+v)v(\eta-v)\over (\gamma-v)(\delta+v)(1-v)}\right)
\end{eqnarray}
\begin{eqnarray}\label{van-2}
1-(\delta+\eta)(b-a)={b-a\over \pi} \int_0^1  {dv\over \sqrt{v(1-v)}}\, v\,
{\rm Log}\left({(\beta+v)v(\eta-v)\over (\gamma-v)(\delta+v)(1-v)}\right)
\end{eqnarray}
Using identities
\begin{equation}\label{firstint}
\int_0^1{1\over \sqrt{u(1-u)}}{\rm Log}(\beta+u) du=2\pi{\rm Log}\left({\sqrt{\beta}+\sqrt{\beta+1}\over 2}\right)
\end{equation}
\begin{equation}\label{secondint}
\int_0^1{1\over \sqrt{u(1-u)}}u{\rm Log}(\beta+u) du={\pi\over 2}\left((\sqrt{\beta}-\sqrt{\beta+1})^2+2{\rm Log}\left({\sqrt{\beta}+\sqrt{\beta+1}\over 2}\right)\right)
\end{equation}
we obtain equations
\begin{eqnarray*}
0&=&\Big(\sqrt{\beta}+\sqrt{\beta+1}\Big)\Big(\sqrt{\eta}+\sqrt{\eta-1}\Big)
-\Big(\sqrt{\gamma}+\sqrt{\gamma-1}\Big)\Big(\sqrt{\delta}+\sqrt{\delta+1}\Big)\\
1&=& {b-a\over 2}\Big\{2(\delta+\eta+1)
+ (\sqrt{\beta+1}-\sqrt{\beta})^2+(\sqrt{\gamma-1}-\sqrt{\gamma})^2\\
&&\qquad \qquad\qquad\qquad\qquad
-(\sqrt{\delta+1}-\sqrt{\delta})^2-(\sqrt{\eta-1}-\sqrt{\eta})^2\Big\}
\end{eqnarray*}

The integral defining the resolvent can be evaluated explicitly
using the following idenity
\begin{eqnarray}
&&{\sqrt{w(w-1)}\over \pi}\int_0^1 {dv\over \sqrt{v(1-v)}} \, {1\over w-v}\, {\rm Log}\,{A+v\over B+v}=\nonumber \\
&&\qquad \qquad
 {\rm Log}\,{A+w\over B+w}+2i \left({\rm Arctan}\, \sqrt{1-w\over w}\sqrt{B\over 1+B}-
 {\rm Arctan}\, \sqrt{1-w\over w}\sqrt{A\over 1+A}\right) \label{ABint}
\end{eqnarray}
It results in the formula:
\begin{eqnarray*}
&&F(z)={\rm Log}\left({(\beta+w)w(\eta-w)\over (\gamma-w)(\delta+w)(1-w)}\right)
+2i\left( {\rm Arctan}\, \sqrt{1-w\over w}\sqrt{\delta\over 1+\delta} \right.\\
&&\left.
- {\rm Arctan}\, \sqrt{1-w\over w}\sqrt{\beta\over 1+\beta}
+{\rm Arctan}\, \sqrt{w\over 1-w}\sqrt{\eta-1\over \eta}
- {\rm Arctan}\, \sqrt{w\over 1-w}\sqrt{\gamma-1\over \gamma}\, \right)
\end{eqnarray*}

Using this expression for the resolvent, vanishing conditions
\eqref{van-1}\eqref{van-2} can be written as:
\begin{equation}\label{triabi}
a={1\over 4\lambda^2}(U(\lambda,\nu,\theta)-\sqrt{V(\lambda,\nu,\theta)}) \qquad
b={1\over 4\lambda^2}(U(\lambda,\nu,\theta)+\sqrt{V(\lambda,\nu,\theta)})
\end{equation}
where $U$ and $V$ are the following polynomials:
\begin{eqnarray*}
U(\lambda,\nu,\theta)&=&  (\nu+\theta)\Big(1-(\nu-\theta)^2\Big) -(\lambda+1)
\Big((1-\nu-\theta)(1+2\lambda+\nu+\theta)+4 \nu \theta\Big)\\
V(\lambda,\nu,\theta)&=&
(1-\nu-\theta)(1-\nu+\theta)(1+\nu-\theta)\\
&&\qquad \times (1+\nu-\theta+2\lambda)(1-\nu+\theta+2\lambda)(1-\nu-\theta+2\lambda)
\end{eqnarray*}

From the formula for the resolvent we can compute the density
$z\in [a,b]$:
\begin{eqnarray}
&&\rho(z)={\rm Im}\, F(z+i0)/\pi=1+{2\over \pi} \left(  {\rm Arctan}\, \sqrt{b-z\over z-a}\sqrt{a-\nu\over b-\nu}
- {\rm Arctan}\, \sqrt{b-z\over z-a}\sqrt{a\over b} \right. \label{densitwotri} \\
&&\qquad\qquad \left.
+{\rm Arctan}\, \sqrt{z-a\over b-z}\sqrt{\theta-b\over \theta-a}
- {\rm Arctan}\, \sqrt{z-a\over b-z}\sqrt{\lambda+1-b\over \lambda+1-a}\, \right)\nonumber \\
\end{eqnarray}
valid for $\nu,\theta$ satisfying $\nu<a<b<\theta$. Note that $\rho(a)=\rho(b)=1$.
\end{proof}

\subsubsection{}
When the size of any of the forbidden intervals in sufficiently small, it merges with the nearest end of the interval $[a,b]$. For example when
$$
\theta=\theta_c={1\over 3}\Big(1+\lambda+\nu+\sqrt{3(1+2\lambda)+(1+\lambda-2\nu)^2}\Big)
$$
the lower forbidden interval connects to $[a,b]$: $b=\theta_c$. When $\theta<\theta_c$ they remain connected, i.e. in this case we still have $b=\theta$.

The same holds for the upper forbidden interval, in which case when
$$
\nu=\nu_c={1\over 3}\Big(1+\lambda+\theta-\sqrt{3(1+2\lambda)+(1+\lambda-2\theta)^2}\Big)
$$
the upper forbidden interval connects with $[a,b]$, $a=\nu_c$
and they stay connected for $\nu <\nu_c$.

The assumption $\nu<a<b<\theta$ which we used in the previous section holds when $\theta <\theta_c$ and $\nu>\nu_c$.

In $\theta \geq\theta_c$ and $\nu > \nu_c$ the region where
$\rho(u)$ is not constant becomes $[a,\theta_c]$ with
\begin{equation*}
a={b (1+\lambda+\nu-b)^2\over \lambda^2}
\end{equation*}
The density in this case can be obtained from \eqref{2FI-dens} by
setting $b, \theta\to \theta_c$. The result is
\begin{multline}\label{den-2}
\rho(z)={2\over \pi} \left(  {\rm Arctan}\, \sqrt{\theta_c-z\over z-a}\sqrt{a-\nu\over \theta_c-\nu}
- \right. \\ \left.{\rm Arctan}\, \sqrt{\theta_c-z\over z-a}\sqrt{a\over \theta_c}- {\rm Arctan}\,
\sqrt{z-a\over \theta_c-z}\sqrt{\lambda+1-\theta_c\over \lambda+1-a}\, \right)
\end{multline}
It is easy to see that now the density satisfies $\rho(a)=1,\rho(b)=0$.

The formula above is valid as long as $\nu >\nu'_c$, namely
where
$$
\nu_c'=\nu_c|_{\theta=\theta_c}={1+\lambda-\sqrt{1+2\lambda}\over 2}
$$
When $\nu\leq \nu'_c$ the upper forbidden interval connects to
the interval $[a,b]$. In this case
\[
a={1+\lambda-\sqrt{1+2\lambda}\over 2}, \ \
b={1+\lambda+\sqrt{1+2\lambda}\over 2}
\]
The density inside of $[a,b]$ can be obtained from \eqref{den-2}
by taking $a, \nu\to\nu'_c$. The result is
\[
\rho(z)=\sqrt{{(\theta'_c-z)(z-\nu'_c)\over \theta'_c\nu'_c}}
\]
which agrees with \eqref{uni-densit}.

\begin{remark}The transition from free to trapped paths is similar to that found by Douglas and Kazakov in QCD.
\end{remark}

\subsubsection{} When $\theta+\nu=1$ the interval $[a,b]$ collapses into a point.

\section{Case of the full hexagon}\label{hexagon}

\begin{figure}
\centering
\includegraphics[width=14.cm]{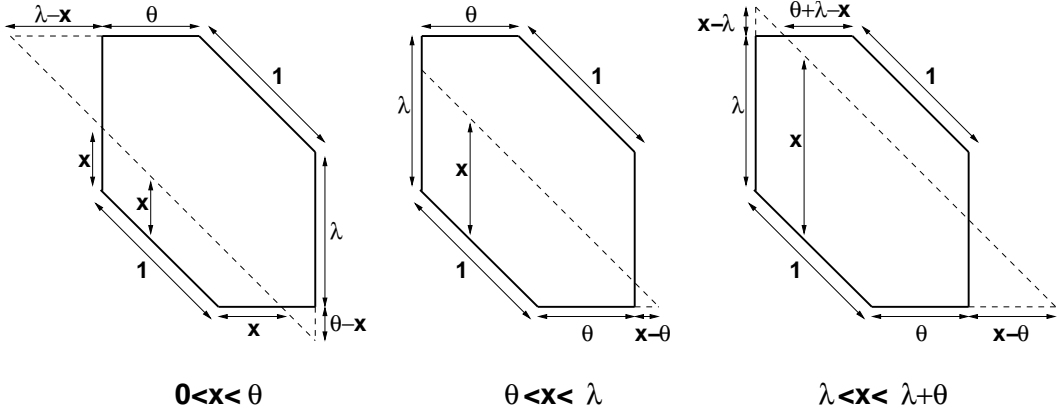}
\caption{We decompose the hexagon $1\times\theta\times\lambda$
into two cut hexagons with one side equal to $1$, and with possible
triangular holes at corners. Three cases must be considered.}\label{fig:hexa}
\end{figure}

In this section we will reproduce the results of \cite{CLP}
for the limit shape of the hexagon tiling and will prepare the ground for the next section where we will use the symmetries of
regions to find the limit density at free boundaries for more complicated regions then in previous sections.

Cut the hexagon as it is illustrated in Fig.\ref{fig:hexa}
into two cut hexagons. In this section we will study the
limit density of tiles by studying the
the limit density of lattice path points
intersecting an arbitrary cut (dashed line) which is parallel to one side of the hexagon . We assume that the hexagon has size $k\times n\times p$, and that the cut is at a distance $m$ from the edge of length $k$. Our goal to pass to
the limit $k\to \infty$ with $n/k=\lambda$, $p/k=\theta$, $m/k=x$
and to compute the limit density of lattice paths through the cut.
We will assume $\lambda\geq \theta$ which clearly does not restrict generality.

The cut separates the hexagon into two regions.
For the purpose of counting lattice paths intersecting
the cut we distinguish three possible cases of geometries
of these regions:
\begin{itemize}
\item the lower region is a cut hexagon, while the
upper region is a cut hexagon with two forbidden intervals,
\item both lower and upper regions have one forbidden interval,
\item the lower region is a cut hexagon with two forbidden intervals, while the upper region is a cut hexagon.
\end{itemize}

All three cases are illustrated on Fig.\ref{fig:hexa} and
have been treated in the previous sections. The new ingredient is that
we must now glue the two halves, by identifying the endpoints $m_i \leftrightarrow m_{k+1-i}'$
of both families of paths.

In the large $k$ limit the density of paths through the cut develops the limit shape which minimizes the action functional determined by the asymptotic of the counting formula or the lattice paths.
The formula is the same for both pieces of the hexagon.
The resulting action functional is the sum of two pieces.

To describe the resulting action let us parameterize the endpoints of paths of the lower half of the hexagon by $\mu_i=m_i/k$  .
In all three cases of Fig.\ref{fig:hexa}, the range of $\mu_i$ is $[0,1+x]$. In the second case there is an extra
restriction $\mu_i<1+\theta$ corresponding to the forbidden interval $[1+\theta,1+x]$, and in the third case the restriction $x-\lambda<\mu_i<1+\theta$, corresponds to two forbidden intervals
$[0, x-\lambda]\cup [1+\theta, 1+x]$.

For the upper cut hexagon the range of endpoints $\mu_i'=m_i'/k$
of lattice paths originated on the upper left side is $[0,1+\lambda+\theta-x]$ in all three cases of Fig.\ref{fig:hexa}, while restrictions apply in the first case $\theta-x<\mu_i'<1+\theta$,
and in the second case $\mu_i'<1+\theta$.
The upper and lower half correspond to an admissible rhombus tiling if and only if the endpoints $\mu_i$
and $\mu_{k+1-i}'$ coincide. It gives the identity
$$
\mu_{k+1-i}'=1 +\theta-\mu_i
$$
which holds in all three cases.

In the limit $k\to \infty$ the asymptotic of the counting formula
of lattice paths gives the action functional
\[
S[\rho]= S_x[\rho]+  S_{\lambda+\theta-x}[\rho^\theta]
\]
where $\rho^\theta(\mu)=\rho(\theta-\mu)$ and
$S_\lambda[\rho]$ is the action \eqref{action-uni}.

The limit density is the minimizer of this
functional with the restrictions described above.
The extrema of $S[\rho]$ are solutions to the integral equation
\begin{equation}\label{simple}
{1\over 2}\,{\rm Log}\, {\mu(\lambda-x+\mu) \over (1+x-\mu)(1+\theta-\mu)} =\dashint_I {\rho(u) du\over \mu-u}
\end{equation}
subject to the restriction $\int_I \rho(u)du=1$.
Here $I=[{\rm Max}(0,x-\lambda), 1+{\rm Min}(x,\theta)]$.

\begin{thm} Depending on the geometry of the cut hexagon, the limit density has the following form.
\begin{enumerate}
\item The limit density $\rho(z)$ is supported on the interval $[a,b]\subset [{\rm Max}(0,x-\lambda), 1+{\rm Min}(x,\theta)]$ with
\begin{eqnarray}
a&=&\left({\sqrt{\lambda(\lambda+\theta-x)}-\sqrt{x\theta(1+\lambda+\theta)}
\over \lambda+\theta}\right)^2 \nonumber \\
b&=&\left({\sqrt{\lambda(\lambda+\theta-x)}+\sqrt{x\theta(1+\lambda+\theta)}
\over \lambda+\theta}\right)^2 \label{abhexa}
\end{eqnarray}
and on this interval is given by
\begin{eqnarray}
\rho(z)&=&{1\over \pi}\left({\rm Arctan}\, \sqrt{b-z\over z-a}\sqrt{1+\theta-a\over 1+\theta-b}
+{\rm Arctan}\,  \sqrt{b-z\over z-a}\sqrt{1+x-a\over 1+x-b} \right. \nonumber \\
&&\qquad\qquad \left. - {\rm Arctan}\,  \sqrt{b-z\over z-a}\sqrt{a\over b}
-{\rm Arctan}\,  \sqrt{b-z\over z-a}\sqrt{\lambda-x+a\over \lambda-x+b}
\right)\label{densitone}
\end{eqnarray}
Outside of $[a,b]$ the limit density is zero.

\item The limit density $\rho(z)$ is $1$ outside $[a,b]$ i.e.
on $[{\rm Max}(0,x-\lambda),a]\cup [b,1+{\rm Min}(x,\theta)]]$,  while inside $[a,b]$ it is given by
\begin{eqnarray}
\rho(z)&=&1+{1\over \pi}\left( {\rm Arctan}\, \sqrt{b-z\over z-a}\sqrt{1+\theta-a\over 1+\theta-b}
-{\rm Arctan}\,  \sqrt{b-z\over z-a}\sqrt{1+x-a\over 1+x-b} \right. \nonumber \\
&&\qquad\qquad \left. +{\rm Arctan}\,  \sqrt{b-z\over z-a}\sqrt{a\over b}
-{\rm Arctan}\,  \sqrt{b-z\over z-a}\sqrt{\lambda-x+a\over \lambda-x+b}
\right)\label{densitwo}
\end{eqnarray}

\item $\rho(z)=1, z\in [{\rm Max}(0,x-\lambda),a]$, and $\rho(z)=0, z\in [b, 1+{\rm Min}(x,\theta)]]$ .
In this case $a$ and $b$ are given by the same formulae as above while
\begin{eqnarray}
\rho(z)&=&{1\over \pi}\left( {\rm Arctan}\, \sqrt{b-z\over z-a}\sqrt{1+\theta-a\over 1+\theta-b}
+{\rm Arctan}\,  \sqrt{b-z\over z-a}\sqrt{1+x-a\over 1+x-b} \right.\nonumber  \\
&&\qquad\qquad \left. +{\rm Arctan}\,  \sqrt{b-z\over z-a}\sqrt{a\over b}
-{\rm Arctan}\,  \sqrt{b-z\over z-a}\sqrt{\lambda-x+a\over \lambda-x+b}
\right)\label{densitri}
\end{eqnarray}

\item $\rho(z)=1, z\in [{\rm Max}(0,x-\lambda),a]$, $\rho(z)=0, z\in [b, 1+{\rm Min}(x,\theta) ]$. In this case
\begin{eqnarray}
\rho(z)&=&1+{1\over \pi}\left( {\rm Arctan}\, \sqrt{b-z\over z-a}\sqrt{1+\theta-a\over 1+\theta-b}
-{\rm Arctan}\,  \sqrt{b-z\over z-a}\sqrt{1+x-a\over 1+x-b} \right. \nonumber \\
&&\qquad\qquad \left. -{\rm Arctan}\,  \sqrt{b-z\over z-a}\sqrt{a\over b}
-{\rm Arctan}\,  \sqrt{b-z\over z-a}\sqrt{\lambda-x+a\over \lambda-x+b}
\right)\label{densifor}
\end{eqnarray}

\end{enumerate}
\end{thm}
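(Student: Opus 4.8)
The plan is to solve the singular integral equation \eqref{simple} separately in each of the four geometric cases by the resolvent method already used in the two preceding Propositions. I would introduce the resolvent
\[
F(z)=\int_{[a,b]}\frac{\rho(u)\,du}{z-u},
\]
where $[a,b]$ is the single interval on which the minimizer is not pinned to $0$ or $1$. The first step is to reduce \eqref{simple}, whose principal value runs over all of $I$, to an equation supported on $[a,b]$. First I would split $I$ into $[a,b]$ together with the flanking intervals on which $\rho$ is constant, and integrate the constant pieces explicitly: an interval with $\rho\equiv 1$ contributes a term $\ln\frac{\mu-p}{\mu-p'}$ that I move to the left-hand side, whereas an interval with $\rho\equiv 0$ contributes nothing. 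What remains is the symmetric boundary condition $F(z+i0)+F(z-i0)=\ln R(z)$ on $[a,b]$, with $R$ a ratio of the four linear factors built from $\mu$, $\lambda-x+\mu$, $1+x-\mu$, $1+\theta-\mu$; the placement of these factors in numerator versus denominator is exactly what will distinguish the cases.

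I would then solve this scalar additive Riemann--Hilbert problem as before. Dividing by $\sqrt{(z-a)(z-b)}$ converts the symmetric condition into a pure jump, and the Plemelj formula yields the Cauchy-type representation
\[
F(z)=\frac{1}{\pi}\sqrt{(z-a)(z-b)}\int_a^b\frac{\ln R(u)}{\sqrt{(u-a)(b-u)}}\,\frac{du}{z-u},
\]
the analogue of \eqref{F-int} and \eqref{one-int-F}. I would evaluate this integral in closed form using the identity \eqref{ABint}, which sends each linear factor of $R$ to one {\rm Arctan} term; the four factors then produce the four {\rm Arctan}'s appearing in \eqref{densitone}--\eqref{densifor}, and taking $\rho={\rm Im}\,F(\cdot+i0)/\pi$ on $[a,b]$ gives the stated densities.

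The four cases differ only in bookkeeping. The three geometries of Fig.\ref{fig:hexa} prescribe which flanking intervals are forbidden ($\rho=0$) and which are fully packed ($\rho=1$); each such pattern fixes whether a given factor sits in the numerator or denominator of $R$, hence the sign in front of its {\rm Arctan}, and whether $\rho\to 0$ or $\rho\to 1$ at the endpoints, hence the additive constant ($0$ in cases (1),(3) and $1$ in (2),(4)). This is precisely the mechanism already visible in the corner problem \eqref{2FI-dens} and its degenerations, where moving one linear factor between numerator and denominator flips one sign and shifts the constant.

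Finally I would fix the endpoints $a,b$ from the large-$z$ expansion of $F$. The normalization $\int_I\rho=1$, together with the lengths of the pinned intervals, forces $F(z)=1/z+O(1/z^2)$; the vanishing of the $z^1$ and $z^0$ coefficients gives two transcendental equations, which I linearize using \eqref{firstint} and \eqref{secondint} and solve to the common formulas \eqref{abhexa}, identical in all four cases. Convexity of $S[\rho]$ then promotes this critical point to the unique global minimizer, so the constructed density is the limit shape. I expect the main obstacle to be exactly this sign-and-constant bookkeeping across the four cases: one must verify that each forbidden/fully-packed pattern produces a density lying in $[0,1]$, with the correct square-root vanishing or saturation at $a$ and $b$, all consistent with the single endpoint formula \eqref{abhexa}.
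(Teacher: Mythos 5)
Your proposal is correct and takes essentially the same route as the paper: the paper likewise absorbs the $\rho\equiv1$ flanking intervals into the left-hand logarithm (its eq.~\eqref{double}), writes the resolvent on the single band $[a,b]$ as a Cauchy-type integral with weight $1/\sqrt{v(1-v)}$, evaluates it in closed form via \eqref{ABint} into the four Arctan terms, and determines $a,b$ case by case from the large-$z$ expansion using \eqref{firstint}--\eqref{secondint}, with convexity of the functional giving uniqueness. One small correction to your bookkeeping: the $1/z$ coefficient of the band resolvent equals $1$ only in case (1); in the other cases it is $1$ minus the total length of the $\rho\equiv1$ pinned intervals (for instance $b-a-x$ in case (2)), which is the normalization the paper actually imposes, and for a single band the two endpoint equations come from killing the $z^0$ term and fixing the $z^{-1}$ term, not from a $z^1$ term.
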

\begin{proof}

{\bf 1.} Let us first assume that the density $\rho$ is supported on an interval $[a,b]$ with ${\rm Max}(0,x-\lambda)<a<b<1+{\rm Min}(x,\theta)$, and that $\rho(a)=\rho(b)=0$. We must solve
eq.\eqref{simple} with the normalization $\int_a^b\rho(u)du=1$. The corresponding resolvent reads:
\begin{eqnarray*}
&&F(z)={\sqrt{w(w-1)}\over 2\pi}\int_0^1 {dv\over \sqrt{v(1-v)}} \, {1\over w-v}\,{\rm Log}\,
{(\beta+v)(\delta+v) \over (\gamma-v)(\eta-v)} \\
&&={1\over 2}\, {\rm Log}\, {(\beta+w)(\delta+w) \over (\gamma-w)(\eta-w)}
+i\left( {\rm Arctan}\, \sqrt{1-w\over w}\sqrt{\eta\over \eta-1}
+{\rm Arctan}\, \sqrt{1-w\over w}\sqrt{\gamma\over \gamma-1}
\right.\\
&&\qquad \qquad \qquad \left. - {\rm Arctan}\, \sqrt{1-w\over w}\sqrt{\beta\over 1+\beta}
-{\rm Arctan}\, \sqrt{1-w\over w}\sqrt{\delta\over 1+\delta}
\right)
\end{eqnarray*}
where $w={z-a\over b-a}$ and:
$$
\beta={a\over b-a}, \quad \gamma={1+x-a\over b-a}, \quad \delta={\lambda-x+a\over b-a}, \quad \eta={1+\theta-a\over b-a}.
$$
The constrain on the density translates to the condition $F(z)= 1/z+O(1/z^2)$, as $z\to \infty$ on the resolvent. Using
identities \eqref{firstint}\eqref{secondint}
we obtain the equations:
\begin{eqnarray*}
&&\Big(\sqrt{\beta+1}+\sqrt{\beta}\Big)\Big(\sqrt{\delta+1}+\sqrt{\delta}\Big)=
\Big(\sqrt{\gamma}+\sqrt{\gamma-1}\Big)\Big(\sqrt{\eta}+\sqrt{\eta-1}\Big)\\
&&1={b-a\over 4}\left( (\sqrt{\beta+1}-\sqrt{\beta})^2+(\sqrt{\gamma}-\sqrt{\gamma-1})^2
+(\sqrt{\delta+1}-\sqrt{\delta})^2+(\sqrt{\eta}-\sqrt{\eta-1})^2 \right)
\end{eqnarray*}
Solving these equations for $a$ and $b$ we get the formulae
\eqref{abhexa}.

{\bf 2.} Next, let us assume the density is one near the boundaries, and varies only on an interval $[a,b]$.
This happens, for example, for sufficiently small $x$,
when $0<x<\theta$ (see the left figure on Fig.\ref{fig:hexa}).  In this case $\rho(z)=1$ for $z\in[0,a]\cup [b,1+x]$.
Eqn.\eqref{simple} turns into:
\begin{equation}\label{double}
{1\over 2}\,{\rm Log}\, {(\mu-a)^2(1+x-\mu)(\lambda-x+\mu) \over \mu(b-\mu)^2(1+\theta-\mu)}
=\dashint_a^b {\rho(u) du\over \mu-u}
\end{equation}
and we have
\begin{eqnarray*}
&&F(z)={\sqrt{w(w-1)}\over 2\pi}\int_0^1 {dv\over \sqrt{v(1-v)}}\,{1\over w-v}\,{\rm Log}\,
{v^2(\gamma-v)(\delta+v) \over(1-v)^2(\beta+v)(\eta-v)}\\
&&={1\over 2}\,{\rm Log}\, {w^2(\gamma-w)(\delta+w) \over(1-w)^2(\beta+w)(\eta-w)}
+i\left( {\rm Arctan}\, \sqrt{1-w\over w}\sqrt{\eta\over \eta-1}
-{\rm Arctan}\, \sqrt{1-w\over w}\sqrt{\gamma\over \gamma-1}
\right.\\
&&\qquad \qquad \qquad \left. +{\rm Arctan}\, \sqrt{1-w\over w}\sqrt{\beta\over 1+\beta}
-{\rm Arctan}\, \sqrt{1-w\over w}\sqrt{\delta\over 1+\delta}
\right)
\end{eqnarray*}
where
$$
\beta={a\over b-a}, \quad \gamma={1+x-a\over b-a},\quad \delta={\lambda-x+a\over b-a},\quad
\eta={1+\theta-a\over b-a}
$$
The constraint for the density implies $F(z)= {b-a-x\over z}+O({1\over z^2}$ for large $z$. This gives:
\begin{eqnarray*}
&& \Big(\sqrt{\beta+1}+\sqrt{\beta}\Big)\Big(\sqrt{\eta}+\sqrt{\eta-1}\Big)=
\Big(\sqrt{\delta+1}+\sqrt{\delta}\Big)\Big(\sqrt{\gamma}+\sqrt{\gamma-1}\Big) \\
&& b-a-x={b-a\over 4}\left(4-(\sqrt{\beta+1}-\sqrt{\beta})^2-(\sqrt{\gamma}-\sqrt{\gamma-1})^2 \right.\\
&&\qquad\qquad \qquad\qquad  \qquad\left.
+(\sqrt{\delta+1}-\sqrt{\delta})^2+(\sqrt{\eta}-\sqrt{\eta-1})^2\right)
\end{eqnarray*}
Remarkably, this leads to the same result \eqref{abhexa} as above, and for the density we obtain \eqref{densitwo}

{\bf 3.} In mixed cases where $\rho(z)=1$ for $z<a$ and $\rho(z)=0$ for $z>b$  the resolvents are:
\begin{eqnarray*}
F(z)&=&{\sqrt{w(w-1)}\over 2\pi}\int_0^1 {dv\over \sqrt{v(1-v)}}\,{1\over w-v}\,{\rm Log}\,
{v^2(\delta+v) \over (\beta+v)(\gamma-v)(\eta-v)}\\
F(z)&=&{\sqrt{w(w-1)}\over 2\pi}\int_0^1 {dv\over \sqrt{v(1-v)}}\,{1\over w-v}\,{\rm Log}\,
{(\beta+v)(\gamma-v)(\delta+v) \over(1-v)^2(\eta-v)}
\end{eqnarray*}
These formulae lead to the same formulae for $a$ and $b$ and to the formulae
\eqref{densitri} \eqref{densifor} for the densities.

\end{proof}

\begin{remark}
At the boundary intervals of length $1$ (they correspond to
cuts at $x=0$ and $x=\lambda+\theta $) the limit density is
 $1$. This can be derived from the formulae in the theorem
 and from the identities
\begin{eqnarray}
\sqrt{\gamma}+\sqrt{\gamma-1}&=&
{\rm Max}(x \lambda(1+\lambda+\theta),\theta(\lambda+\theta-x))^{1\over 4}\nonumber \\
\sqrt{\gamma}-\sqrt{\gamma-1}&=&
{\rm Min}(x \lambda(1+\lambda+\theta),\theta(\lambda+\theta-x))^{1\over 4}\nonumber \\
\sqrt{\beta+1}+\sqrt{\beta}&=&
{\rm Max}(\lambda(\lambda+\theta-x),x\theta(1+\lambda+\theta))^{1\over 4}\nonumber \\
\sqrt{\beta+1}-\sqrt{\beta}&=&
{\rm Min}(\lambda(\lambda+\theta-x),x\theta(1+\lambda+\theta))^{1\over 4}\nonumber \\
\sqrt{\eta}+\sqrt{\eta-1}&=&
{\rm Max}(x \lambda,\theta(1+\lambda+\theta)(\lambda+\theta-x))^{1\over 4}\nonumber \\
\sqrt{\eta}-\sqrt{\eta-1}&=&
{\rm Min}(x \lambda,\theta(1+\lambda+\theta)(\lambda+\theta-x))^{1\over 4}\nonumber \\
\sqrt{\delta+1}+\sqrt{\delta}&=&
{\rm Max}(x\theta,\lambda(\lambda+\theta-x)(1+\lambda+\theta))^{1\over 4}\nonumber \\
\sqrt{\delta+1}-\sqrt{\delta}&=&
{\rm Min}(x\theta,\lambda(\lambda+\theta-x)(1+\lambda+\theta))^{1\over 4}\label{minmax}
\end{eqnarray}
which, in turn, implies
\begin{eqnarray}
1+\theta-a&=&
\left({\sqrt{ x \lambda}+\sqrt{\theta(1+\lambda+\theta)(\lambda+\theta-x)}
\over \lambda+\theta} \right)^2 \nonumber \\
1+\theta-b&=&
\left({\sqrt{ x \lambda}-\sqrt{\theta(1+\lambda+\theta)(\lambda+\theta-x)}
\over \lambda+\theta}\right)^2 \nonumber \\
1+x-a&=&
\left({\sqrt{ x \lambda(1+\lambda+\theta)}+\sqrt{\theta(\lambda+\theta-x)}
\over \lambda+\theta} \right)^2 \nonumber \\
1+x-b&=&
\left({\sqrt{ x \lambda(1+\lambda+\theta)}-\sqrt{\theta(\lambda+\theta-x)}
\over \lambda+\theta} \right)^2 \nonumber \\
\lambda-x+a&=&
\left({\sqrt{ x\theta}-\sqrt{\lambda(\lambda+\theta-x)(1+\lambda+\theta)}
\over \lambda+\theta} \right)^2 \nonumber \\
\lambda-x+b&=&
\left({\sqrt{ x\theta}+\sqrt{\lambda(\lambda+\theta-x)(1+\lambda+\theta)}
\over \lambda+\theta} \right)^2 \label{xes}
\end{eqnarray}
\end{remark}

\begin{figure}
\centering
\includegraphics[width=8.cm]{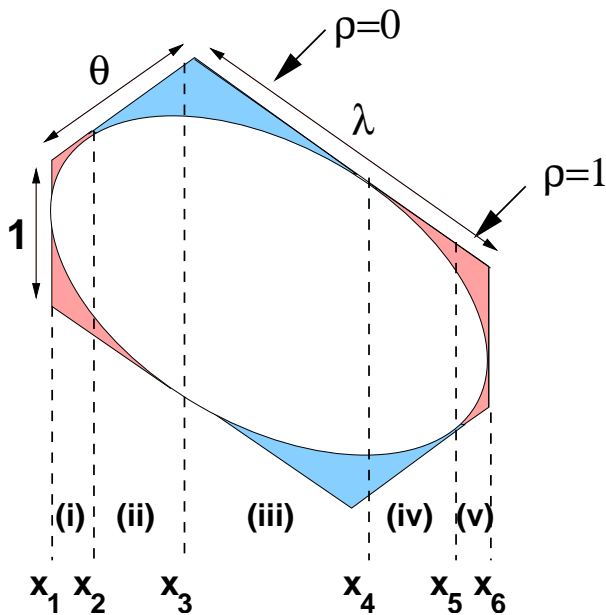}
\caption{The full solution for the density $\rho$ of path points (by vertical sections)
in the case of an hexagon of size $1\times \lambda\times\theta$.
The pink regions correspond to $\rho=1$, while the blue ones stand for $\rho=0$, both
frozen.
The domain of segments $[a,b]$ over which $\rho$ varies sweeps the inner ellipse.}\label{fig:hexafreeze}
\end{figure}

The general picture is summarized in Fig.\ref{fig:hexafreeze}.
The density $\rho$ varies only within the ellipse inscribed inside the hexagon. The points where the ellipse touching
the boundary of the hexagon have $x$-coordinates
$x_1=0$ and $x_6=\lambda+\theta$ (vertical tangents), $x_2={\theta\over 1+\lambda}$,
$x_3={\lambda\over 1+\theta}$, $x_4={\theta(1+\lambda+\theta)\over 1+\theta}$,
and $x_5={\lambda(1+\lambda+\theta)\over 1+\lambda}$,
with $x_1\leq x_2\leq x_3\leq x_4\leq x_5\leq x_6$.
They correspond, respectively, to
$b=a$, $b=1+x$, $a=0$, $b=1+\theta$, and $a=x-\lambda$.

The limit density is constant outside of the ellipse   Fig.\ref{fig:hexafreeze} and is given by:

{(i)} $x\in[x_1,x_2]$: top and bottom $\rho =1$

{(ii)} $x\in [x_2,x_3]$: top $\rho =0$, bottom $\rho =1$

{(iii)} $x\in [x_3,x_4]$: top and bottom $\rho =0$

{(iv)} $x\in [x_4,x_5]$: top $\rho =1$, bottom $\rho =0$

{(v)} $x\in [x_5,x_6]$: top and bottom $\rho =1$.

The explicit expressions for $\rho$ inside $[a,b]$ are given by eq.\eqref{densitwo} in the cases (i) and (v),  eq.\eqref{densitri} in the case (ii), eq.\eqref{densitone}
in the case (iii), and eq.\eqref{densifor} in the case (iv).

\section{Comparison with the general results of Kenyon-Okounkov-Sheffield}\label{burgers}

\subsection{Equations for the slope}
\subsubsection{}
Limit shapes for rhombi tilings and more generally for
dimer models on bipartite graphs are known to be solutions
to certain variational problem \cite{CKP}\cite{KOS}. They minimize the rate functional which is determined by the free energy of the system on a torus.

Assume we have a polygonal region with sides $s_1,\dots, s_{3d}$ of lengths $N_1,\dots, N_{3d}$ parallel to
the axes of the triangular lattice, such that three possible
axes of the lattice alternate in the sequence $s_i$.
Consider random tilings of such region with the probability
$\propto q^{vol(h)}$ where $h$ is the height function corresponding to the tiling. When $q\to 1$ such that $q^{-\epsilon c}$ and
$\epsilon N_i$ are kept finite random tilings develop the limit shape
at the macroscopical scale \cite{KO}. Moreover in
this case the limit shape is described by an algebraic curve which can be explicitly constructed from the geometric data of the region.

Recall that rhombus tilings of a simply connected domain are in
a bijection with the height functions (certain discrete surfaces in 3D over this domain).
When the domain is not simply connected the height function does not exist but
what would be its gradient still makes sense.
It is called the {\bf slope} of the tiling.
The slope describes the local
proportions of tiles of types $a,b,c$ as follows. Consider the tiling of a toroidal domain of fixed size,
let $N_a,N_b,N_c$ be the numbers of tiles used in the tiling with $N=N_a+N_b+N_c$, then the slope is defined as
$(s,t)=\nabla h=(h_x,h_y)=(N_a/N,N_b/N)$ (here, $x,y$ refer to coordinates in a basis $(u,v)$, with
$u^2=v^2=1$ and $u.v =-1$).

\subsubsection{} As explained in \cite{KOS}\cite{KO} there
are two natural complex functions on the domain in question,
$z$ and $w$ parameterizing the height function:
\begin{equation}\label{slop}
 \nabla h={1\over \pi}\left({\rm Arg}(-w),{\rm Arg}(-{1\over z})\right)
\end{equation}
In the limit when $q\to 1$ and the regions increase
at the consistent rate (see above),  these variables satisfy the system of equations:
\begin{eqnarray*}
P(z,w)&=&0\\
{\rm Re}(z_x/z+w_y/w)&=& c
\end{eqnarray*}
for some constant $c$.
In addition $z$ and $w$ should satisfy the boundary conditions
determined by the geometry of the region and restrictions on tilings.

The function $P(z,w)$ is polynomial for periodically weighted lattices. It is the determinant of the Kasteleyn matrix
with quasiperiodic\footnote{Quasiperidicity means there are extra factors $z^{\pm1}$ and $w^{\pm1}$ for the dimers
crossing two given lines parallel to the $x$ and $y$ directions
respectively, depending on the Kasteleyn orientation of the corresponding edges.}
boundary conditions which is the partition function of the dimer model on the(toroidal)
fundamental domain of its weights.

In our case
\[
P(z,w)=1+z+w .
\]

Solving the system for $z$ and $w$ by complex characteristics
we change it to the system of equations
\begin{eqnarray}\label{PQ-syst}
P(z,w)&=&0\\
Q(e^{-c x} z,e^{-c y} w)&=&0
\end{eqnarray}
The function $Q$ is an analytic function, which depends on the
boundary of the region.

When the region is polygonal, the function $Q$ is polynomial.
It can be computed according to the description of \cite{KO} (Section 3.3.3).

One of the remarkable phenomena we observe in tilings of large
domains is the existence of the frozen regions near the boundary.
In these regions the height function is linear. In the case
of rhombus tilings of a hexagon, the curve separating the
frozen part and the disordered phase (where the height function
varies smoothly) is an ellipse \cite{CLP}, and it was called the arctic circle.
In \cite{KO} such curves are called clout curves.
One of the important corollaries is that for polygonal regions
such curves are real algebraic in $e^{cx}, e^{cy}$. It is also
important that they form a family of curves as one varies $c$.

When $c\to 0$, the second equation in \eqref{PQ-syst} becomes
 $Q(z,xz+yw)=0$.

The arctic circle (the boundary of the limit shape) in
the case of polygonal region is an algebraic curve
intersecting each linear piece of the boundary once
where it is tangent to the boundary. It may develop
cusps. It may also be disconnected, i.e. it may have
"islands" and bubbles" (in the terminology of \cite{KO}),
see Fig. 5, fig. 17
from \cite{KO}.

\subsection{Case of the cut hexagon}

As explained before, the limiting density of exit points of LGV paths on a large cut hexagon $k\times n\times n$
with $n/k=\lambda$ fixed, $k\to\infty$ is identical to that of passing points of LGV paths through the main diagonal (which is where we make the cut). These paths describe
the rhombus tilings of the hexagon which is obtained from
the cut hexagon by gluing to the reflection w.r.t. the cut. In the language of \cite{KO},
this density is the density of tilted rhombi and is determined by
the limit shape of rhombus tilings of the complete hexagon.

For the unweighted tilings by rhombi we have $P(z,w)=1+z+w$.
The polynomial $Q$ for such tiling of the hexagon has degree $2$ ( half of the number of sites of the region) and reads:
$$
Q(z,x z+y w)=\lambda^2 (z+z^2)-(xz+y w)^2-(1+\lambda)(xz+yw)
$$
The equations \eqref{PQ-syst} for $c=0$ become quadratic equation for $z$:
$$
(\lambda^2-(x-y)^2)z^2+(\lambda^2-(1+\lambda-2y)(x-y))z +y(1+\lambda-y)=0
$$
The arctic circle is the curve in $(x,y)$ on which the discriminant of this equation vanishes:
\begin{eqnarray*}
0&=&(\lambda^2-(1+\lambda-2y)(x-y))^2-4y(1+\lambda-y)(\lambda^2-(x-y)^2)\\
&=& (1+2\lambda)(x-y)^2+\lambda^2(x+y-\lambda-1)^2-\lambda^2(1+2\lambda)
\end{eqnarray*}
It is tangent to the six edges of the hexagon, namely $y=0,y=\lambda+1$,
$x=0,x=\lambda+1$, and $y=x\pm \lambda$.
This curve intersects the reflection axis $y=x$ (the cut) at points
$x={\lambda+1\over 2}\pm {\sqrt{1+2\lambda}\over 2}$
which coincide with the ends of the support of $\rho$ .

The passing points of the LGV paths along the cut correspond to tiles of type $a$ or $b$.
The local density of such tiles is therefore $(N_a+N_b)/N=h_x+h_y$. Hence the density
\eqref{densitone} should be compared to the ``vertical" slope $h_x+h_y$ in our solution.

Solving for $z$ at $y=x$, we find that
$\lambda^2(z+{1\over 2})^2=(x-{1+\lambda\over 2})^2-{1+2\lambda\over 4}$, so that:
\begin{eqnarray*}
z&=&{1\over 2}\left(-1\pm \sqrt{(x-a)(x-b)\over a b}\right)\\
w&=&{1\over 2}\left(-1\mp \sqrt{(x-a)(x-b)\over a b}\right)
\end{eqnarray*}
and finally we get
$$
h_x=h_y={1\over \pi} {\rm Arctan}\,  \sqrt{(b-x)(x-a)\over a b}
$$
and therefore $\rho(x)=h_x+h_y$ as expected.

\subsection{Case of the full hexagon}

Here we will compare the results of previous sections with the
computation of the limit shape based on the equation \eqref{PQ-syst}.

In the case of the full hexagon with edges $1,\lambda,\theta$ the
equations \eqref{PQ-syst} becomes
\begin{eqnarray*}
0&=&1+z+w\\
0&=&\lambda \theta (z+z^2) -(xz+yw)^2-(\lambda-\theta)z(xz+yw)-(1+\lambda)(xz+yw)
\end{eqnarray*}
The arctic curve appear as the condition that the
discriminant of this equation vanishes. It is the ellipse:
\begin{equation}\label{arctichexa}
{\Big((1+\lambda)x+(1+\theta)y-(1+\lambda)(1+\theta)\Big)^2\over 1+\lambda+\theta}+
{\Big((1+\lambda)x-(1+\theta)y\Big)^2\over \lambda\theta}=(1+\lambda)(1+\theta)
\end{equation}
tangent to respectively $x=0$, $x=\theta$, $y=0$, $y=\lambda$, $y=x+\lambda$ and
$y=x-\theta$.

To get the density of LGV points, let us
solve for $z$ on some arbitrary vertical line $y=x+t$.
It is easy to see that the line crosses the arctic curve
at the points
$$
x_\pm=\left({\sqrt{\lambda(t+\theta)}\mp \sqrt{\theta(\lambda-t)(1+\lambda+\theta)}
\over \lambda+\theta}\right)^2
$$
These are identified with the ends of the support of $\rho$ given in eq.\eqref{abhexa},
namely $x_-=a$ and $x_+=b$,
upon the substitution $x=\lambda-t$.
The solution for $z$ reads:
$$
z=-{1\over 2(t+\theta)(\lambda-t)}\left(t+(\lambda-t-x)(t+\theta)+(\lambda-t)(t+x)\pm(\lambda+\theta)\sqrt{(x-a)(x-b)}\right)
$$
and we finally get
\begin{eqnarray*}
h_x&=&{1\over \pi}{\rm Arctan}\, {(\lambda+\theta)\sqrt{(b-x)(x-a)}\over
(\lambda-t-x)(t+\theta)+t+(\lambda-t)(t+x)}\\
h_y&=&{1\over \pi}{\rm Arctan}\, {(\lambda+\theta)\sqrt{(b-x)(x-a)}\over
(\lambda-t+x)(t+\theta)-t-(\lambda-t)(t+x)}
\end{eqnarray*}
Using the addition formula for Arctan, we finally get
\begin{equation}\label{slopsum}
h_x+h_y={1\over \pi}{\rm Arctan}\, {(\lambda+\theta)\sqrt{(b-x)(x-a)}\over
\lambda\theta-t(1+\theta)+2x(x+t-1-{\lambda+\theta\over 2}) }
\end{equation}

To compare this with the results on the LGV point densities, let us assume for definiteness
that we have $x_3<x<x_4$ (case (iii)), namely
${\lambda\over 1+\theta}-\theta<t<{\lambda\theta\over 1+\theta}$.
Using eqs.
\eqref{abhexa} and \eqref{xes} and the restriction on $t$, we find that
\begin{eqnarray*}
\sqrt{1+\theta-a\over 1+\theta-b}&=&
{\sqrt{\theta(\theta+t)(1+\lambda+\theta)}+\sqrt{\lambda(\lambda-t)}\over
\sqrt{\theta(\theta+t)(1+\lambda+\theta)}-\sqrt{\lambda(\lambda-t)}}\\
\sqrt{a\over b}&=&
{\sqrt{\theta(\lambda-t)(1+\lambda+\theta)}-\sqrt{\lambda(\theta+t)}\over
\sqrt{\theta(\lambda-t)(1+\lambda+\theta)}+\sqrt{\theta(\theta+t)}}\\
\sqrt{1+x-a\over 1+x-b}&=&
{\sqrt{ \lambda(\lambda-t)(1+\lambda+\theta)}+\sqrt{\theta(\theta+t)}\over
\sqrt{ \lambda(\lambda-t)(1+\lambda+\theta)}-\sqrt{\theta(\theta+t)}}\\
\sqrt{\lambda-x+a\over \lambda-x+b}&=&
{\sqrt{\lambda(\theta+t)(1+\lambda+\theta)}-\sqrt{\theta (\lambda-t)}\over
\sqrt{\lambda(\theta+t)(1+\lambda+\theta)}+\sqrt{\theta (\lambda-t)}}
\end{eqnarray*}
We use the subtraction/addition formulae for Arctan to get:
\begin{eqnarray*}
&&{\rm Arctan}\, \sqrt{b-z\over z-a}\sqrt{1+x-a\over 1+x-b}-{\rm Arctan}\, \sqrt{b-z\over z-a}\sqrt{a\over b}=
{\rm Arctan}\, {\sqrt{(b-z)(z-a)}(\lambda+\theta)\over \theta\lambda-t(1+\theta)+z(\lambda-\theta)}\\
&&{\rm Arctan}\, \sqrt{b-z\over z-a}\sqrt{1+\theta-a\over 1+\theta-b}-
{\rm Arctan}\, \sqrt{b-z\over z-a}\sqrt{\lambda-x+a\over \lambda-x+b}=
{\rm Arctan}\, {\sqrt{(b-z)(z-a)}(\lambda+\theta)\over \theta\lambda+t(1+\theta)-z(\lambda-\theta)}
\end{eqnarray*}
and finally
\begin{eqnarray*}
\rho(z)&=&{1\over \pi}\left({\rm Arctan}\, {\sqrt{(b-z)(z-a)}(\lambda+\theta)
\over \theta\lambda-t(1+\theta)+z(\lambda-\theta)}
+{\rm Arctan}\, {\sqrt{(b-z)(z-a)}(\lambda+\theta)
\over  \theta\lambda+t(1+\theta)-z(\lambda-\theta)}\right)\\
&=&{1\over \pi} {\rm Arctan}\, {\sqrt{(b-z)(z-a)}(\lambda+\theta)\over
\theta\lambda-t(1+\theta)+2z(z+t-1-{\lambda+\theta\over 2}) }
\end{eqnarray*}
in agreement with \eqref{slopsum}.

The other cases follow analogously.

\section{Symmetries}\label{symmetries}

\begin{figure}
\centering
\includegraphics[width=8.cm]{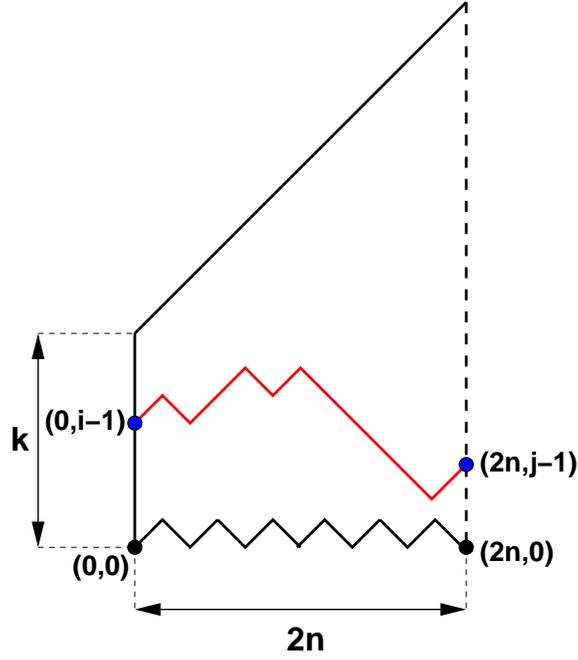}
\caption{A cut hexagon with zig-zag fixed boundary.}\label{fig:halfcut}
\end{figure}

\subsection{Case of a cut hexagon with a fixed zig-zag boundary}

\subsubsection{} The ``half" cut hexagon (see Fig.\ref{fig:halfcut}), of the
size $k\times 2n\times 2n+k$ is another case when the total number of lattice path configurations nicely factorizes. In this case the lattice paths start at each point of the left boundary and ends
on the right boundary. The low boundary is a horizontal zig-zag line.

The LGV formula expresses the total number of configurations of  such non-intersecting paths, which start
at $(0,2i-2)$, $i=1,2,...,k$, and end at $(2n,2m_j-2)$, $j=1,2,...,k$, $1\leq m_j\leq k+n$ as the determinant:
det$_{1\leq i,j\leq k}(p_{i,m_j}^{(n)})$. Here  $p_{i,m}^{(n)}$ is the numbers of paths from $(0,2i-2)$ to $(2n,2m-2)$.
By the standard reflection principle, this number $p_{i,m}^{(n)}$ is exactly the number of unrestricted paths
from $(0,2i-2)$ to $(2n,2m-2)$ minus the number of of unrestricted paths from $(0,-2i)$ to $(2n,2m-2)$:
$$
p_{i,m}^{(n)}={2n\choose n+m-i}-{2n\choose n+m+i-1}
$$
Finally we have the factorization formula
\begin{multline} \label{halfcuthex}
\tilde{Z}_{k,n}(\{\bm\})=\det_{1\leq i,j\leq k}(p_{i,m_j}^{(n)})=\prod_{i=1}^k {(2n+2i-2)!
\over (n+m_i+k-1)! (n-m_i+k)!}\\
\,\prod_{1\leq i<j\leq k}(m_j-m_i)
\,\prod_{1\leq i\leq j\leq k} (m_i+m_j-1)
\end{multline}

Let us now compute the asymptotic of this number, when $k\to \infty$, and say $n/k=\al$ fixed,
while $\mu_i=m_i/k$ converges to a continuous function on
$[0,1]$, or equivalently gets distributed on $[0,\alpha+1]$
with the limit density $\rho(\mu)$.

We have
\begin{multline}
-{1 \over k} {\rm Log}(\tilde{Z}_{k,n}(\{\bm\}))=C+{1\over k}\sum_{i=1}^k ((\al+1+\mu_i)\, {\rm Log}(\al+1+\mu_i)+\\
(\al+1-\mu_i)\, {\rm Log}(\al+1-\mu_i)) -{1\over k^2}\sum_{i<j} {\rm Log} |\mu_j^2-\mu_i^2|
\end{multline}
for some constant $C$ depending only on $\al$.
As $k\to \infty$ this gives the action (the rate functional)
\begin{multline}
S[\rho]= C+\int_0^{\alpha+1}((\al+1+\mu)\, {\rm Log}(\al+1+\mu)+
(\al+1-\mu)\, {\rm Log}(\al+1-\mu))\rho(\mu)d\mu-\\
\int_0^{\alpha+1}
\int_0^{\alpha+1}{\rm Log} |\mu^2-\nu^2|\rho(\mu)\rho(\nu)d\mu d\nu
\end{multline}
Fixed points of this functional are solutions to the integral equation
$$
{\rm Log}\, {\al+1+\mu \over \al+1-\mu} =\dashint_0^{\alpha+1}
\left({1\over \mu -u}+{1\over \mu+u}\right) \sigma(u)du
$$
subject to the constraint $\int_0\sigma(u)du=1$. As before
assume the density is zero for $z\in [a,\alpha+1]$.
It is convenient to extent the problem to the symmetric interval
$[-a,a]$ assuming that $\sigma(z)$ is extended as a symmetric function $\sigma(z)=\sigma(-z)$. The integral equation becomes
\[
{\rm Log}\, {\al+1+\mu \over \al+1-\mu} =\dashint_{-a}^{a}
{1\over \mu -u} \sigma(u)du
\]
with the constraint $\int_{-a}^a\sigma(u)du=2$.

Similarly to previous sections it is easy to obtain
an explicit formula for the resolvent $G(z)=\int_{-a}^a{1\over \mu -u} \sigma(u)du$:
\[
G(z)={1\over \pi}\sqrt{z^2-a^2}\int_a^a{1\over \sqrt{u^2-a^2}}{\rm Log}\, {\al+1+u \over \al+1-u}{du \over z -u}
\]
The constraint on the density is equivalent to the
condition $G(z)={2\over z}+ O({1\over z^2)}$ as $z\to \infty$.
Solving this, we arrive to $a=\alpha+1-{\sqrt{3+4\alpha}\over 2}$.
Combining this with the explicit evaluation of the integral for $G(z)$ we obtain the limit density:
\begin{equation}\label{halfcutdensi}
\sigma(z)= {2\over \pi} \, {\rm Arctan}\, {\sqrt{3+4\al-4z^2}\over 1+2\al}
\end{equation}
for $z\in [0,\sqrt{{3\over 4}+\al}]$ and $\sigma(z)=0$ for $z>\sqrt{{3\over 4}+\al}$.

\subsubsection{} We could have  found this result a priori, by
recognizing that the half-cut hexagon from the previous section is the fundamental domain for the obvious reflection symmetry of the cut hexagon from the first section.

Because the limit density is the unique minimizer of the
rate functional, it is invariant with respect to this reflection.
In terms of the limit density or the cut hexagon from section
\ref{one} this is the symmetry $\rho(z)=\rho(\lambda+1-z)$.

Taking into account the change of variables from
Fig. \ref{tilingcuthex}  to Fig. \ref{fig:halfcut} we can see
that indeed
\[
\rho(z+\alpha +1)=\sigma(z)
\]
where $\lambda+1=2(\alpha +1)$, $\rho(t)$ is given by \eqref{uni-densit} and $\sigma(z)$ is by \eqref{halfcutdensi}.

\subsection{The reflection principle for limit densities}
\subsubsection{} The reflection principle which was illustrated above can be used to compute the limit density for a random tiling along the free boundary in a more general setting. In particular, it can be used to compute such density from the limit shape
on doubled domain with fixed boundary.

Assume that the free boundary of a domain $\mathcal D$ is a collection of segments on a line and  let ${\mathcal D}^r$ be the reflection with respect to this line.
The doubled domain ${\mathcal D}\cup {\mathcal D}^r$ is closed
with feasible (in a sense of \cite{KO}) boundary. It is invariant with respect to the reflection ${\mathcal D}\to {\mathcal D}^r$.
If we extend the definition of weights of tiling to
${\mathcal D}^r$ such that the weights will also be invariant
with respect to this reflection, the resulting limit shape
density will also be invariant.
It is also clear that in this case the
limit shape density for tilings of the domain ${\mathcal D}\cup {\mathcal D}^r$ at the symmetry axis is the same as
the the limit density for the tilings of ${\mathcal D}$ with the free boundary condition along this axis. Thus, the problem of computing the density of tiles at the free boundary reduces
to the computation of the limit shape for a symmetric domain with
fixed boundary along the symmetry axis.

As an example we can compare the limit density at the
diagonal of the symmetric hexagon with $\lambda=\theta$
with the density at the free boundary of the cut hexagon
with the same value of $\lambda$. The symmetry axis of the
hexagon correspond $x=\lambda$ in the computation of the density
along slices of the hexagon.
The density along this line is given by
eq.\eqref{densitone} with $x=\theta=\lambda$, and with $a$ and $b$ given by eq.\eqref{abhexa}.
These expressions for the hexagon case are identical to formulae \eqref{uni-bound}\eqref{uni-densit}for the density of endpoints of
lattice paths along the free boundary
for the half-hexagon of the same shape.

\begin{figure}
\centering
\includegraphics[width=15.cm]{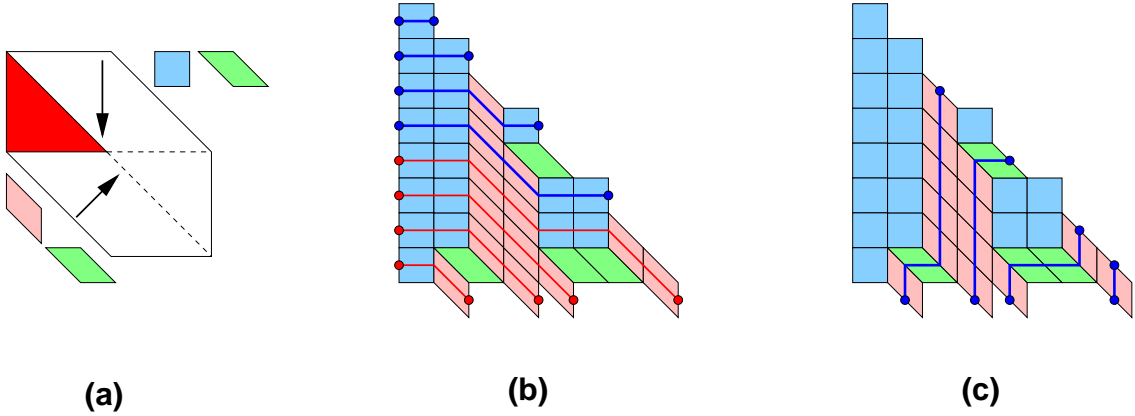}
\caption{The regular hexagon is decomposed into six triangles (a). The two inner boundaries
are both parts of lines that divide the hexagon in two (dashed lines), each corresponding to cuts along
lattice paths made of the two indicated tiles, for which the density along the cut is known.
In (b) we interpret the tiling of the triangle in terms of lattice paths starting at the fixed edge
and ending on the two free inner boundaries, along the third type of edge, not used for the paths in (a).
In (c), the same tiling is now interpreted differently in terms of paths that start on the lower free boundary
via tiles of the third type, and end on the top one with the two first type of tiles.}\label{fig:parallelo}
\end{figure}

\subsubsection{} Now let us discuss limit shapes corresponding to lattice paths in the cut hexagon with forbidden intervals along the cut.

Lattice paths and equivalent to tilings by rhombi which is also equivalent to monotonic piles
of unit cubes in the corner of a ``3D-room". This is
simply another way to phrase that rhombus tiling configurations are
equivalent to the corresponding height functions.

\begin{figure}
\centering
\includegraphics[width=12.cm]{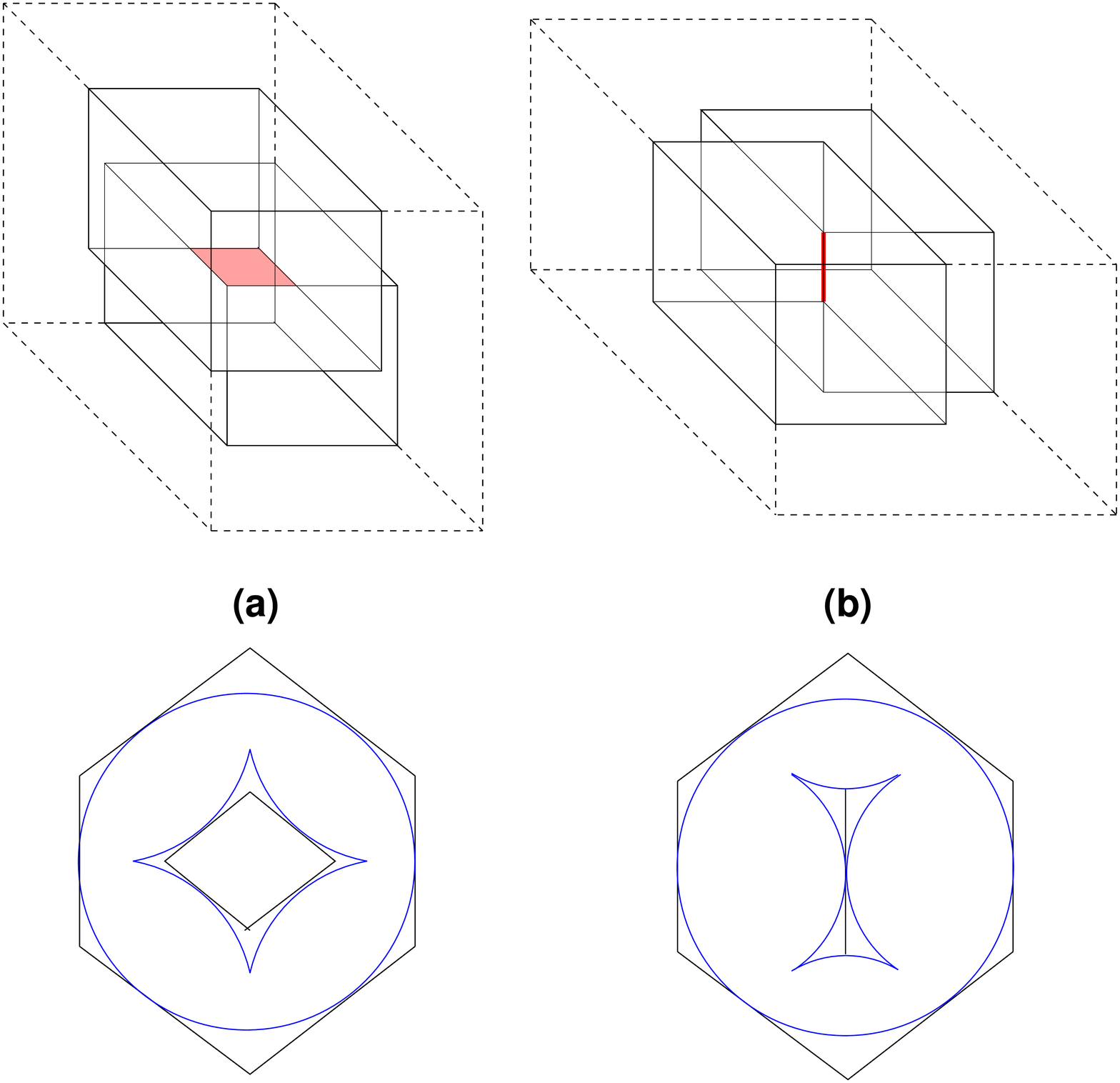}
\caption{The 3D realization of forbidden (a) and fully packed (b) boundaries. Both are obtained
by removing from the large parallelepiped (the room, in dashed lines) two smaller ones touching
opposite corners of the room. In the case (a), they share a plane domain
(a square in the case of regular objects), while in case (b) they intersect along a segment.
The diagonal of the square and the segment, in projection, correspond to respectively
the forbidden interval and the fully packed one. We have also represented the
expected limit shapes for the associated rhombus tilings.}\label{fig:paralepip}
\end{figure}

The case of the cut hexagon with one forbidden interval corresponds to pile of cubes in a rectangular room with two parallelepipeds
removed from it (see Fig.\ref{fig:paralepip} (a)).
One is removed from the upper closest corner, one is removed from the lower corner of the other side. They
should have the complementary heights (their sum is equal to the
height of the room) and they projections on the $(x,y)$ plane should intersect.
If this configuration is symmetric with respect to
the reflection $(x,y)\to (y,x)$, the lattice paths
corresponding to such a pile of cubes will have the forbidden interval along the diagonal of the square which is the
intersection of the lower side of the upper parallelepiped
and the upper side of the lower parallelepiped. If we
cut this rectangular room along the diagonal, the corresponding lattice paths
will be exactly lattice paths in the cut hexagon with
a forbidden interval along the cut.

In this case the density of the LGV paths along the cut is
determined by the limit slope of the corresponding tilings,
or equivalently by the limit height of corresponding piles
of cubes. The boundary of the limit shape in this case is disconnected. In terminology \cite{KO} this limit shape
is the result of the degeneration of the island with the hexagon inside and six cusps. The degeneration ``flattens"
the hexagon to a rhombus with horizontal main diagonal
(two vertical sides shrink to points). As a result six cusps
around the hexagonal island degenerate into four cusps around
the rhombus.

A similar interpretation holds for the density of end points of lattice paths with fully packed interval along the free
boundary of the cut hexagon. In this case piles of cubes
also should be in the a 3D rectangular room with two parallelepipeds removed
from it (see Fig.\ref{fig:paralepip} (b)). As in the case
of forbidden intervals, one of them is removed from the upper closest corner, one is removed from the lower corner of the other side. The difference is that their projections on the $(x,y)$ plane should be complementary (they should intersect over exactly one point). The sum of heights should be greater then the height
of the room. In this case parallelepipeds intersect over
a vertical segment. When such a configuration is symmetric with
respect to the reflection $(x,y)\to (y,x)$, the piles of cubes in such room with removed
parallelepipeds are equivalent to the
lattice paths on the left side continuing to the lattice paths on
the right side with the fully packed interval at the
diagonal cut (the interval where two parallelepipeds intersect).

In this case the limit shape will be discontinuous.
In the terminology of \cite{KO} this case is also a degeneration
of an island but the hexagon now becomes "squashed"
sidewise such that it degenerates to a vertical segment
(tilted sides degenerate to points).
In this case six cusps also degenerate to four cusps.

\subsubsection{} This symmetry principle extends to situations with other symmetries. Consider for instance
a regular hexagon (corresponding to $\lambda=\theta=1$ in the previous section).
Then the domain is invariant under rotations by $2\pi/3$, and reflections w.r.t. axes joining opposite vertices.
Let us cut the hexagon into
six triangles according to these symmetries, and focus on one of them (see Fig.\ref{fig:parallelo} (a)).

The counter clock-wise rotation by $2\pi/3$ brings a tiling
to a tiling. It maps the square looking tiles from the upper side of the triangle from Fig. 8(c)
to the tiles along the lower side of the triangle from Fig. 8(b)
where lattice paths end.

Because the hexagon is invariant with respect to rotations,
because the action function (the entropy, or the rate function) is invariant with respect to such rotations, and because it has unique minimizer the limit hight function is also invariant with respect
to such rotations, if one takes into account that the rotations also rotate tiles.

In particular , if $\sigma(z)$ is the density of lattice paths
along the lower boundary of the triangle from Fig. 8(c) and
$\rho(z)$ is the density of paths along the upper boundary of
the same triangle, we have $\rho(z)=1-\sigma(z)$. Here $z$ is the
distance from the right lower corner.

The density $\rho(z)$ we computed earlier in eq. \eqref{uni-densit} for all values of $\lambda$. For $\lambda=1$ we have:
\begin{equation}\label{exitile}
\rho(z)=\left\{\begin{matrix}
0 & z\in [0,1-\frac{\sqrt{3}}{2}] \\
{2\over \pi}\, {\rm Arctan}\, \sqrt{3-4(z-1)^2} & z\in[1-\frac{\sqrt{3}}{2},1] \end{matrix}\right.
\end{equation}
Here $z$ is the coordinate along the upper side of the triangle
with $z=0$ being the upper corner and $z=1$ being the lower right corner.

For the density $\sigma(z)$ we obtain
\begin{equation}\label{entertile}
\sigma(z)=\left\{\begin{matrix}
1 & z\in [0,1-\frac{\sqrt{3}}{2}] \\
{2\over \pi}\, {\rm Arctan}\, \sqrt{ 1\over \sqrt{3-4(z-1)^2}} & z\in[1-\frac{\sqrt{3}}{2},1] \end{matrix}\right.
\end{equation}

More generally, from the results for lattice paths in a hexagon, we obtain the density of the lattice paths in the triangle
from Fig. 8(c).  Let $L_x$ be the line parallel to the upper side of the triangle which is at the distance $x$
from the left lower corner. In the formulae below we use the
coordinate $z$ which is $0$ at the left side of the triangle and
changes from $0$ to $x$ along $L_x$ as we move downward.
Let $\rho(z,x)$ be the density of paths crossing $L_x$, $0\leq z\leq x$.

For $0\leq x\leq 1-\frac{\sqrt{3}}{2}$ we have
$\rho(z,x)=1$ for all $0\leq z\leq x$.

For $1-\frac{\sqrt{3}}{2}\leq x\leq \frac{1}{2}$
and $z\in [0,a]$
we have $\rho(z,x)=1$ and
\begin{eqnarray}
\rho(z,x)&=&1+{1\over \pi}\left( {\rm Arctan}\, \sqrt{b-z\over z-a}\sqrt{1+\theta-a\over 1+\theta-b}
-{\rm Arctan}\,  \sqrt{b-z\over z-a}\sqrt{1+x-a\over 1+x-b} \right. \nonumber \\
&&\qquad\qquad \left. +{\rm Arctan}\,  \sqrt{b-z\over z-a}\sqrt{a\over b}
-{\rm Arctan}\,  \sqrt{b-z\over z-a}\sqrt{\lambda-x+a\over \lambda-x+b}
\right)\label{densit1}
\end{eqnarray}
 when $z\in[a,x]$.
 Here  $a=\frac{1+x}{2}-\frac{1}{2}\sqrt{(2-x)3x}$ and $b=
 \frac{1+x}{2}+\frac{1}{2}\sqrt{(2-x)3x}$.

For $\frac{1}{2}\leq x \leq 1$, we have $\rho(z,x)=0$
when $0\leq z \leq a$ and
\begin{eqnarray}
\rho(z,x)&=&{1\over \pi}\left( {\rm Arctan}\, \sqrt{b-z\over z-a}\sqrt{1+\theta-a\over 1+\theta-b}
+{\rm Arctan}\,  \sqrt{b-z\over z-a}\sqrt{1+x-a\over 1+x-b} \right. \nonumber \\
&&\qquad\qquad \left. -{\rm Arctan}\,  \sqrt{b-z\over z-a}\sqrt{a\over b}
-{\rm Arctan}\,  \sqrt{b-z\over z-a}\sqrt{\lambda-x+a\over \lambda-x+b}
\right)\label{densit2}
\end{eqnarray}

These formulae compare with eq. \eqref{exitile} as $\rho(z)=\rho(z,1)$.

\subsubsection{Limit shape of TSSCPP}

\begin{figure}
\centering
\includegraphics[width=8.cm]{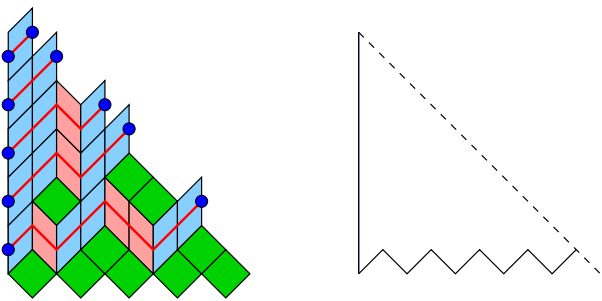}
\caption{TSSCPP are the rhombus tilings of a half-equilateral triangle, with fixed
boundary along the half-edge and the zig-zag median, and free boundary along the
remaining edge. We have indicated with dots the starting and ending points of the
non-intersecting paths with a floor, in bijection with the tilings. The exit tiles are those
marked by dots along the free boundary.
We expect the zig-zag line to become the reflection symmetry axis
of the full equilateral triangle in the continuum.}\label{fig:tsscpfin}
\end{figure}

The Totally Symmetric Self-Complementary Plane Partitions (TSSCPP) are in bijection
with the tilings of a half equilateral triangle of edge size $2n$, with fixed straight boundary
along one edge, fized zig-zag boundary along the median perpendicular to that edge, and
free boundary along the remaining edge (see Fig.\ref{fig:tsscpfin}).

The region in this problem can be naturally identified with the
fundamental domain of the full, fully symmetric hexagon.
It is natural to choose it as the triangle with $0\leq x\leq 1$ and $x\leq z\leq \frac{1+x}{2}$. Under this identification
the zig-gag line becomes the line $z=x$. As $n\to \infty$
it becomes the
symmetry axis of the limit shape of the height function for the
equilateral hexagon.

Using the same symmetry arguments as above we can compute the limit density of paths crossing the vertical line $L_x$ which is
at the distance $x$ from the left side of the triangle.
Denote this density by $\sigma(z,x)$.

When $0\leq x\leq 1-\frac{\sqrt{3}}{2}$ we have $\sigma(z,x)=1$ for
$0\leq z \leq a$ and $\sigma(z,x)$ is given by eq. \eqref{densit1}
for $a\leq z\leq \frac{1+x}{2}$.

When $1-\frac{\sqrt{3}}{2}\leq x\leq 1/2$ the density is given by
eq. \eqref{densit1} for all $x\leq z\leq \frac{1+x}{2}$. It is
given by eq. \eqref{densit1} for all $x\leq z\leq \frac{1+x}{2}$
when $1/2\leq x\leq 1$.

These expressions agree with the function eq. \eqref{entertile}:
$\sigma(x,x)=\sigma(x)$.

\subsection{Some other examples}

\begin{figure}
\centering
\includegraphics[width=10.cm]{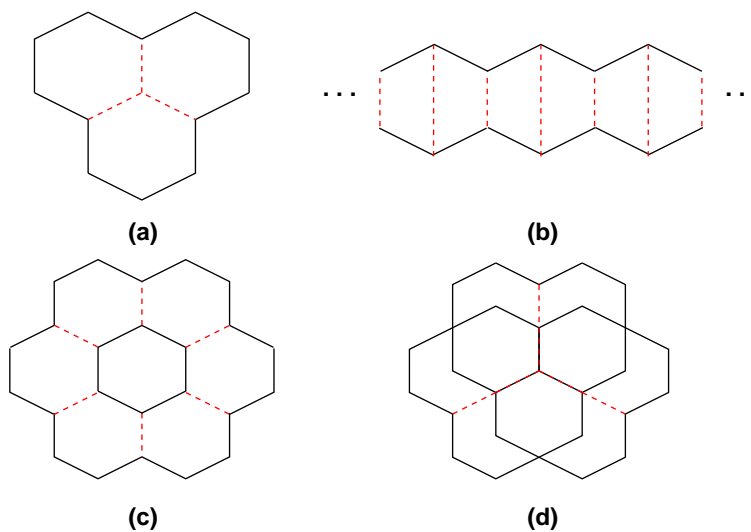}
\caption{Various applications of the symmetry principle, to a hexagon with
two free boundaries.}\label{fig:variedfree}
\end{figure}

Another example of the application of the symmetry principle
is the computation of the limit shape for the equilateral
hexagonal region with free boundary condition on two adjacent
sides. In this case one should glue three such hexagons into the
region shown on Fig.\ref{fig:variedfree}(a). The limit shape
for this region can be found using the methods of \cite{KO}.

The cut hexagon with both sides having
free boundary conditions on tilings correspond to a random
tiling of the infinite region obtained by gluing infinitely many
hexagons along vertical boundaries Fig.\ref{fig:variedfree}(b).
This case also describes the hexagon where two opposite sides
have the free boundary.

For more complicated examples of regions with
free boundaries the corresponding region with fixed boundary may
not be simply connected (see Fig.\ref{fig:variedfree}(c)),
nor planar (see Fig.\ref{fig:variedfree}(d)).

\end{document}